\documentclass{article}
\usepackage{graphicx} 

\usepackage[letterpaper,top=2cm,bottom=2cm,left=3cm,right=3cm,marginparwidth=1.75cm]{geometry}

\usepackage{setspace} 
\usepackage{tabularx} 
\usepackage{fontspec}
\usepackage[english]{babel}
\usepackage{amsthm}
\usepackage{amsmath}
\usepackage{amssymb}
\usepackage{csquotes}
\usepackage{mathtools}
\usepackage[shortlabels]{enumitem}
\usepackage{dsfont}
\usepackage{float}
\usepackage{graphicx}
\usepackage[format=plain,
            labelfont=it,
            textfont=it]{caption}
\usepackage{subcaption}

\numberwithin{equation}{section}

\newcommand{\N}{\mathbb{N}}

\newcommand{\R}{\mathbb{R}}
\newcommand{\A}{\mathcal{A}}

\newcommand{\E}{\mathbb{E}}
\newcommand{\e}{\varepsilon}
\newcommand{\V}{\mathbb{V}}
\newcommand{\Cov}{\text{Cov}}
\newcommand{\Tr}{\text{Tr}}

\DeclareMathOperator*{\argmin}{arg\, min}

\newcommand{\as}{\xrightarrow{\text{a.s.}}}

\theoremstyle{definition}
\newtheorem{thm}{Theorem}[section]

\newtheorem{lemma}[thm]{Lemma}

\newcommand{\ny}{\medskip\noindent}

\usepackage{amsmath}
\usepackage{graphicx}
\usepackage[colorlinks=true, allcolors=blue]{hyperref}
\usepackage{natbib}

\title{
Choosing optimal Strang splitting estimators of nonlinear stochastic differential equation models\\ \, }

\author{Magnus Frederik Jensen$^{(1)*}$, Johan Ravn Cornelius$^{(1)*}$ and Susanne Ditlevsen$^{(1)}$\\
\vspace{2mm}\\
{\small (1) Department of Mathematical Sciences, University of Copenhagen, Copenhagen, Denmark}\\
{\small * Joint first authors.} \\
{\small \emph{Address for correspondence:} Susanne Ditlevsen, Department of Mathematical Sciences,}\\ {\small Universitetsparken 5, 2100 Copenhagen, Denmark. Email: susanne@math.ku.dk}}
\date{}

\begin{document}
\maketitle

\begin{abstract}
\noindent
The Strang splitting estimator is a powerful estimator for parametric inference in multivariate stochastic differential equation models with nonlinear drift and additive noise. While the choice of splitting does not affect the asymptotic distribution of the estimator, it makes a huge impact in finite-sample settings and it has not yet been shown how the splitting can be chosen optimally.
We derive error measures for the transition densities of the Strang splitting scheme, in particular calculating the bias up to the order of $h^3$, where $h$ is the length of the time step. 
We study the connection between these error measures and the performance of the Strang splitting estimator in the double-well potential model and the stochastic FitzHugh-Nagumo model, respectively. Our simulation studies suggest that linearization around fixed points yields accurate parameter estimates for potential models, while other splittings perform better for slow-fast excitable models.
\end{abstract}

\section{Introduction}

Statistical inference for nonlinear multivariate stochastic differential equation (SDE) models remains a major challenge in applied probability and statistics. Such models arise naturally in a wide range of scientific fields, including neuroscience, climate dynamics, ecology, finance, and biology, where complex dynamical behaviour is combined with stochastic forcing. In most practically relevant cases, the transition densities of the underlying diffusion processes are unavailable in closed form, rendering exact likelihood-based inference intractable. Consequently, statistical estimation must rely on approximations of the transition densities or on simulation-based methods. While a number of approaches have been proposed, many existing methods either become computationally demanding in high dimensions, require restrictive assumptions on the model structure, or lose accuracy in the presence of strong nonlinearities, see \cite{parameterestimation} and references therein for an overview of the literature.

A common strategy is to locally linearize the drift and apply Gaussian approximations, such as the Euler-Maruyama (EM) scheme and related methods. These approaches are often computationally efficient, but their accuracy can deteriorate substantially for highly nonlinear systems or for observation schemes with moderately large time steps. More accurate methods based on particle filtering or simulation-based likelihood approximations are available, but these methods may become prohibitively expensive in high-dimensional settings.

Recently, a new class of pseudo-likelihood estimators based on Strang splitting schemes was introduced for nonlinear SDEs with additive noise \citep{parameterestimation,pilipovic:2025}. The key idea is to decompose the drift field into a linear component combined with the diffusion term, which admits an analytically tractable Ornstein-Uhlenbeck (OU) flow, and a nonlinear deterministic component, which is solved separately. The resulting Strang splitting estimator combines computational simplicity with surprisingly strong numerical performance. In particular, the method scales well with dimension, remains computationally feasible for strongly nonlinear systems, and performs accurately even for relatively large observation intervals. The method preserves important structural properties of the underlying dynamics while avoiding many of the stability issues associated with standard discretization schemes \citep{parameterestimation}.

However, an important unresolved challenge is that the splitting itself is not unique. For a given nonlinear drift field, infinitely many decompositions into linear and nonlinear parts are possible, and different choices may lead to dramatically different finite-sample behaviour of the resulting estimator. Although asymptotically (for time step between observations going to zero and total observation interval going to infinity) all admissible splittings induce the same limiting distribution of the estimator \citep{parameterestimation}, their practical performance in finite samples may differ substantially in terms of bias, variance, numerical stability, and approximation accuracy of the transition densities. This raises two fundamental questions: how should an ”optimal” splitting be defined, and how can such a splitting be identified in practice?

In this paper, we investigate these questions systematically. Our focus is on understanding how the choice of splitting influences the finite-sample properties of the Strang splitting estimator and on developing principled criteria for selecting good splittings. Rather than relying solely on asymptotic arguments, we study local approximation errors and the accuracy of the approximating transition densities.

Our main contributions are the following. In Section \ref{section: moments of one-step predictions}, we derive higher-order moment expansions for the one-step predictions generated by the Strang splitting scheme and compare these with the moments of the true diffusion process. This yields explicit expressions for the leading-order bias and variance terms associated with a given splitting and provides theoretical insight into how different choices of linearization affect the quality of the approximation. In Section \ref{Section: Optimal Strang splitting}, we propose and investigate several notions of optimality for selecting splittings, including criteria based on minimizing local errors in expectation and variance of the transition density as well as global errors obtained by integrating over the invariant distribution.
We study both fixed and adaptive splitting schemes, demonstrating through simulation studies that the choice of splitting can have a substantial impact on estimation accuracy. Finally, we illustrate the methodology on one-dimensional and multidimensional nonlinear SDE models, namely the double-well potential model \citep{benzi1982stochastic,benzi1983theory} and the stochastic FitzHugh-Nagumo system \citep{FITZHUGH1961445,Nagumo1962,buckwar_etal}, where different splittings lead to markedly different finite-sample performance.

Overall, our results show that the quality of the Strang splitting estimators depends crucially on the chosen decomposition of the dynamics and that carefully designed adaptive splittings can significantly improve inference in nonlinear stochastic systems.

\section{Strang splitting estimator}\label{section: Strang splitting}
In this Section we restate the definition of the Strang splitting scheme and estimator from \cite{parameterestimation}.

\subsection{Problem setup}
We consider the following parametric model
\begin{align}\label{problem}
    d X_t = F_\beta(X_t)dt + \Sigma dW_t, \quad X_0=x_0,
\end{align}
where $X_t \in \mathcal{X} \subseteq \R^d$, $\beta \in \R^p,$ $\Sigma \in \R^{d\times d}$, $W$ is a $d$-dimensional Brownian motion and $F_\beta:\R^d\to\R^d$ is a function parametrized by $\beta.$ 
We denote the parameter of the model as $\theta=(\beta,\Sigma) \in \Theta \subseteq (\R^p \times \R^{d\times d})$ and consider discretely observed data of $X_t$ of the form $x_{0:N}=(x_{t_0},x_{t_1},\ldots,x_{t_N})\in \R^{d\times (N+1)}$ at time steps $0=t_0<\ldots<t_N=T$, which satisfy $t_n-t_{n-1}=h>0$ for all $n\in\{1,\ldots,N\}$. The goal is to estimate $\theta$ from $x_{0:N}$.

We impose the following assumptions, which ensures that \eqref{problem} has a unique strong solution (for details, see e.g. \cite{parameterestimation}). For every $\theta \in \Theta$,
\begin{enumerate}
   \item[\textbf{A1.}] $F_\beta$ is three times continously differentiable with respect to $x$ and twice continously differentiable with respect to $\theta$, i.e., $F_\beta \in C^3$ in $x$ and $F_\beta \in C^2$ in $\beta$.
   \item[\textbf{A2.}] $F_\beta$ is one-sided Lipschitz, i.e., there exists a constant $C>0$ such that $$(x-y)^\top (F_\beta (x) - F_\beta (y)) \leq C \| x-y\|^2 \quad \forall x,y \in \mathcal{X}.$$
   \item[\textbf{A3.}] $F_\beta$ and its partial derivatives are of polynomial growth in $x$, uniformly in $\beta$, i.e., there exist constants $C>0$ and $k \geq 1$ such that
   $$\| F_\beta (x) - F_\beta (y)\|^2 \leq C \left (1+\|x\|^{2k-2}+\|y\|^{2k-2} \right ) \|x-y\|^2 \quad \forall x,y \in \mathcal{X}.$$
   \item[\textbf{A4.}] $\Sigma\Sigma^{\top}$ is positive definite.
\end{enumerate}
Assumption {\bf A1} is stronger than needed; usually only $C^2$ in $x$ is required. However, we need the third derivative for the error evaluations of the pseudo-likelihood we propose later. Also assumption {\bf A4} is stronger than needed, as the results do generalize to hypoelliptic models (models with rank-defficient diffusion matrices, but which still admit densities with respect to the Lebesgue measure) but we will not treat that in this paper.

\subsection{The splitting scheme and the pseudo-likelihood}
Strang splitting relies on the splitting of $F(x)$ in (\ref{problem}) into a linear part $A(x-b)$ and a nonlinear part $N(x)$, such that $F(x) = A(x-b)+N(x)$. Such splittings always exist, since for any $A$ and $b$, we define $N(x) = F(x) - A(x-b)$. We construct the following subequations for a given $A\in \R^{d\times d}$ and $b\in \R^d:$

\begin{align}
    &dX_t^{[1]}=A\left(X_t^{[1]}-b\right)\,dt +\Sigma\,dW_t, &X_0^{[1]}=x_0, \label{linear split}\\
&dX_t^{[2]}=N\left(X_t^{[2]}\right)\,dt, &X_0^{[2]}=x_0. \label{nonlinear split}
\end{align}
Denote the $h-$flows of these two systems by $\Phi_h$ and $f_h$, respectively, i.e., the forward propagation of the processes defined by \eqref{linear split} and \eqref{nonlinear split} from time $t$ to $t+h$, conditional on $X_t^{[j]}, j = 1,2$, 
\begin{align*}
    X_{t+h}^{[1]} &= \Phi_h\left(X^{[1]}_t\right),\\
    X_{t+h}^{[2]} &= f_h\left(X_t^{[2]}\right).
\end{align*}
The solution to (\ref{linear split}) is an OU process, and its $h-$flow is a random variable given by
\begin{align*}
    \Phi_h\left(x\right)=e^{Ah}x+(I-e^{Ah})b+\xi_{h},
\end{align*}
where $\xi_{h}\sim\mathcal{N}(0,\Omega_h)$ with 
\begin{equation}
\label{eq: Omega_h}
 \Omega_h=\int_0^he^{A(h-r)}\Sigma\Sigma^{T}e^{A^{\top}(h-r)}dr
= h\Sigma\Sigma^{\top} + \frac{h^2}{2}(A\Sigma\Sigma^{\top}+\Sigma\Sigma^{\top}A)+O(h^3).   
\end{equation}

\ny
Denoting $\Phi_h^{[S]}=f_\frac{h}{2}\circ \Phi_h \circ f_\frac{h}{2},$
the Strang scheme is iteratively defined by
\begin{align}\label{eq: Strang flow}
    X^{[S]}_{t_k} 
    &=
    \Phi_h^{[S]}\left(X^{[S]}_{t_{k-1}}\right) 
    = f_{\frac{h}{2}} \left( \mu_h \big( f_\frac{h}{2}(X^{[S]}_{t_{k-1}})\big) + \xi_{h,k} 
    \right),
\end{align}
where $\mu_h(x)=e^{Ah}x+(I-e^{Ah})b.$
We refer to $\Phi_h^{[S]}(X_{t_{k-1}})$ as the \textit{one-step prediction} of $X_{t_k}$.

\ny
The Strang scheme induces a pseudo-likelihood with transition densities that are nonlinear transformations of Gaussians. The pseudo-loglikelihood is given (up to a constant and proportionality) by
\begin{align}
\ell^{[S]}_N(\theta)
    &= 
    -N \log\big( \det \Omega_h\big)
    + \sum_{k=1}^N
    \left(- z_{t_k}^\top \Omega_h^{-1} z_{t_k} + 2\log\left( | \det D f_{\frac{h}{2}}^{-1}(x_{t_k}) |\right) \right), \label{eq:loglikelihood}
\end{align}
where
    $
z_{t_k}=f^{-1}_{\frac{h}{2}}(x_{t_k}) - \mu_h(  f_{\frac{h}{2}}(x_{t_{k-1}})),
    $
yielding the Strang splitting estimator $\hat \theta$ as the value of $\theta$ that maximizes the pseudo-loglikelihood. The first two terms are the usual terms in a Gaussian likelihood, the last term is due to the nonlinear transformation.

In \eqref{eq:loglikelihood}, we implicitly assume that the choice of $b$ and $A$ and thus also $N$ in the splitting is the same for all one-step predictions. However, we can also define an adaptive splitting scheme where the linear part (\ref{linear split}) (and thus $N(x)$) is chosen in some optimal way as a function of the data. To construct a certain transition density of $x_{t_k}$ given $x_{t_{k-1}}$, we might wish to choose $b$ and $A$ in a way that depends either on $x_{t_k}$ or on $x_{t_{k-1}}$, and we shall see examples of such strategies in Section \ref{Section: Optimal Strang splitting}.
Let $\Omega_{h;x}, \mu_{h;x}$ and $ f_{h;x}$ denote the replacements of the corresponding expressions in \eqref{eq:loglikelihood} when $A,b$ and $N$ depend on the data $x_{0:N}$. With $z_{t_k;x}=f_{\frac{h}{2};x}^{-1}(x_{t_k})-\mu_{h;x}(f_{\frac{h}{2};x}(x_{t_{k-1}})),$ we then obtain 
\begin{align}\label{eq: adaptive splitting likelihood}
    \ell_N^{[S]}(\theta)
    = \sum_{k=1}^N
    \left(
    -\log \left( \det \Omega_{h;x} \right)
    -  z_{t_k;x}^\top \Omega_{h,x}^{-1}
   z_{t_k;x}
    +2 \log \left( | \det D f_{\frac{h}{2};x}^{-1} (x_{t_k}) | \right)
    \right)
\end{align}
as the pseudo-loglikelihood for the adaptive splitting scheme.

\section{Moments of one-step predictions}\label{section: moments of one-step predictions}

It was shown in \cite{parameterestimation} that 
\begin{align*}
 \E\left[ X_{t_k}-\Phi^{[S]}(X_{t_{k-1}}) \ \big| \  X_{t_{k-1}}=x\right]
 =O(h^3)
\end{align*}
for $x\in\R^d$ for all choices of $b$ and $A$. 
The main theoretical contribution in this paper is Theorem $\ref{nyt teorem}$ which extends this result. We prove that the Strang one-step prediction is in fact biased in the order $h^3$. Additionally, there is an $h^3$-order error in the variance, and both terms depend on the choice of splitting. 
This may offer an explanation as to why the Strang splitting estimator performs better for certain splittings even though the asymptotic distribution of the estimator is the same for all choices.

For notational simplicity, in this Section we restrict ourselves to one-dimensional SDEs, i.e., $d=1$, and suppress the dependence on the parameter in the notation. We expand the moments of $X_{t_k}$ using the following lemma \citep[Lemma 1.10]{EstimatingFunctions}, adapted to our setting. 
\begin{lemma}\label{lemma infinitesimal generator}
    Let $X$ be the solution to \eqref{problem} 
    satisfying  assumptions \textbf{A2-A3}
    and let $\mathcal L $ be its infinitesimal generator,
    \[
(\mathcal L f)(x)
=F(x)\,\frac{d f}{d x}(x)
+
\frac12
\Sigma^2
\frac{d^2 f}{d x^2}(x).
\]
Assume that $F \in C^{2n}(\mathcal{X})$ in $x$ for some $n\in\N$ and that $F$ and all its derivatives are of polynomial growth. Assume that $\varphi:\R\to\R$ is $C^{2(n+1)}(\mathcal{X})$ and that $\varphi$ and its derivatives are also of polynomial growth. Then, for all $t,h\geq 0$ and $x\in\R$, 
    \begin{align*}
        \E\left[ \varphi(X_{t+h}) \, | \, X_{t} =x \right]
    = \sum_{j=0}^n \frac{h^j}{j!} (\mathcal L^j\varphi)(x)
    + O(h^{n+1}),
    \end{align*}
    where $\mathcal L^j$ denotes the operator $\mathcal L$ applied $j$ times.
\end{lemma}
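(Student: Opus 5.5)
The natural route is Itô's formula combined with iterated conditioning; this is the standard Dynkin-type expansion. Write $u(s,x)=\E[\varphi(X_{t+s})\mid X_t=x]$. Since the coefficients do not depend on time, Itô's formula applied to $\varphi(X_{t+s})$ gives
\[
\varphi(X_{t+h})=\varphi(X_t)+\int_t^{t+h}(\mathcal L\varphi)(X_s)\,ds+\int_t^{t+h}\Sigma\,\varphi'(X_s)\,dW_s .
\]
Next we take conditional expectations given $X_t=x$. The stochastic integral is then a true martingale with zero mean, because $\varphi'$ has polynomial growth and, under \textbf{A2}--\textbf{A3}, $X$ has finite conditional moments of all orders, locally uniformly in time: $\sup_{s\le t+h}\E[\|X_s\|^m\mid X_t=x]\le C_m(1+\|x\|^{m})$. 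This moment bound is the preliminary step. We obtain it by applying Itô's formula to $\|X_s\|^{m}$ and using that one-sided Lipschitz continuity gives $x^\top F(x)\le C(1+\|x\|^2)$; Gronwall's lemma then closes the estimate.

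The core of the argument is an iteration. Apply the same identity to $\psi=\mathcal L^j\varphi$ for $j=1,\dots,n$. Each $\mathcal L^j\varphi$ involves derivatives of $F$ up to order $2j-2\le 2n$ and derivatives of $\varphi$ up to order $2j$. Under the stated hypotheses ($F\in C^{2n}$, $\varphi\in C^{2(n+1)}$) these functions are $C^2$ for $j\le n$, so Itô's formula applies to them, and by the product and chain rules they are of polynomial growth. Substituting iteratively gives
\[
\E[\varphi(X_{t+h})\mid X_t=x]=\sum_{j=0}^n\frac{h^j}{j!}(\mathcal L^j\varphi)(x)+R_n,
\]
\[
R_n=\int_0^h\!\int_0^{s_1}\!\!\cdots\!\int_0^{s_n}\E[(\mathcal L^{n+1}\varphi)(X_{t+s_{n+1}})\mid X_t=x]\,ds_{n+1}\cdots ds_1 .
\]
This step uses Fubini, which the moment bounds justify. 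To check the regularity here, note that $\mathcal L^{n+1}\varphi$ requires $2n$ derivatives of $F$, since the top-order term is $F^{(2n)}\varphi'$ times constants, and $2(n+1)$ derivatives of $\varphi$. This is exactly why the hypotheses are calibrated as they are.

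Finally, bound the remainder. $\mathcal L^{n+1}\varphi$ has polynomial growth, $|\mathcal L^{n+1}\varphi(y)|\le C(1+|y|^m)$, so the moment bound gives $|R_n|\le C(1+|x|^m)\,h^{n+1}/(n+1)!$, that is, $O(h^{n+1})$ for fixed $x$. Uniformity holds for $h$ in a bounded interval and $x$ in compacts, with polynomial dependence on $x$.

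I expect the main obstacle to be the moment bound under only the one-sided Lipschitz and polynomial growth conditions, since $F$ need not be globally Lipschitz. The plan is to handle it with a localization argument: stop at exit times from balls, apply Itô's formula to $(1+\|X\|^2)^{m/2}$, use Gronwall's lemma, and then let the balls grow via Fatou's lemma. That same localization also justifies the vanishing of the stochastic integrals.
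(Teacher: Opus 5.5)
Your proposal is correct. The paper gives no proof of its own; it imports the result from \citet[Lemma 1.10]{EstimatingFunctions}, and your argument is the standard proof behind that lemma: moment bounds from \textbf{A2}--\textbf{A3} via localization and Gronwall, then iterating Dynkin's formula on $\mathcal L^j\varphi$ to obtain a simplex-integral remainder in $\mathcal L^{n+1}\varphi$, with the smoothness orders $2n$ and $2(n+1)$ correctly accounted for. One cosmetic fix: the moment bound should read $\sup_{s\in[t,t+h]}\E[|X_s|^m\mid X_t=x]\le C_m(1+|x|^m)$, with $C_m$ depending on an upper bound for $h$; this is the uniformity you state at the end anyway.
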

 
\ny
We now state our main result, the proof of which is found in Section \ref{section: proofs}.

\begin{thm}\label{nyt teorem}
        Let $dX_t=F(X_t)dt+\sigma dW_t$ be a one-dimensional SDE with $F \in C^5(\mathcal{X})$ satisfying assumptions \textbf{A1}-\textbf{A4} and let $\Phi_h^{[S]}$ be the Strang $h$-flow defined in \eqref{eq: Strang flow}. Then
    \begin{enumerate}[label=(\roman*)]
        \item The bias of $\Phi_h^{[S]}(X_{t_{k-1}})$ is of order $h^3$ and depends on the splitting, such that\\ 
        $\E[ \Phi_h^{[S]}(X_{t_{k-1}})- X_{t_k}\,|\, X_{t_{k-1}}=x ]=h^3 B_\theta(x)+O(h^4)$ where
\begin{eqnarray}
    B_\theta(x)  
    &=& \nonumber 
    \frac{1}{24}\Big(AN'(x)N(x)+N''(x)N(x)A(x-b) -N'(x)^2A(x-b)\Big)\\
    &&
     +\frac{1}{12}\Big(A^2N(x)-N'(x)A^2(x-b)+N''(x)A^2(x-b)^2\Big)\\ 
    &&
     +\sigma^2\Big(\frac{1}{12}N'''(x)A(x-b)-\frac{1}{16}N''(x)N'(x)+\frac{1}{48}N'''(x)N(x)\Big) + \frac{\sigma^4}{48}N''''(x)\nonumber
    . \label{eq:B}
\end{eqnarray}

        \item The difference in variance of $\Phi_h^{[S]}(X_{t_{k-1}})$ is of order $h^3$ and depends on the splitting, such that\\
        $\V[\Phi_h^{[S]}(X_{t_{k-1}})|X_{t_{k-1}}=x]-\V[X_{t_k}|X_{t_{k-1}}=x]=h^3 \Delta V_\theta(x)+O(h^4)$ where
        \begin{eqnarray}
    \Delta V_\theta(x)  
    &=& \nonumber 
    \sigma^2\left(\frac{1}{12}N''(x)N(x) 
    + \frac{1}{3}N''(x)A(x-b)
    -\frac{1}{6}N'(x)^2
    -\frac{1}{3}AN'(x)\right)+\frac{\sigma^4}{6}N'''(x) .
\end{eqnarray}
    \end{enumerate}
\end{thm}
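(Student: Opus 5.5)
We expand both sides of each identity in powers of $h$ up to order $h^3$, with $x$ fixed, and subtract. For the true process we use Lemma \ref{lemma infinitesimal generator} with $n=3$ (possible since $F\in C^5$; for the second moment the $h^3$ term only needs $F$ up to a fourth derivative). Applying it to $\varphi(y)=y$ gives
\[
\E[X_{t_k}\mid X_{t_{k-1}}=x]=x+hF+\tfrac{h^2}{2}\mathcal L F+\tfrac{h^3}{6}\mathcal L^2F+O(h^4).
\]
Applying it to $\varphi(y)=y^2$ and subtracting the square of the mean gives $\V[X_{t_k}\mid X_{t_{k-1}}=x]=\sigma^2h+\sigma^2F'h^2+c_3h^3+O(h^4)$. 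Here $c_3$ is an explicit polynomial in $F,F',F'',F'''$ and $\sigma^2$. In both expansions we substitute $F(x)=A(x-b)+N(x)$, so that $F'=A+N'$, $F''=N''$, $F'''=N'''$, and so on. Every coefficient then becomes a polynomial in $A$, $(x-b)$, $N$ and the derivatives of $N$ at $x$.

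For the Strang prediction we expand the nonlinear flow in Taylor form around its initial point:
\[
f_{h/2}(y)=y+\tfrac h2N(y)+\tfrac{h^2}{8}N'N(y)+\tfrac{h^3}{48}\bigl(N''N^2+N'^2N\bigr)(y)+O(h^4).
\]
We set $y_0=f_{h/2}(x)$, which is deterministic, and $\mu=e^{Ah}y_0+(1-e^{Ah})b$, also expanded to order $h^3$. We then write $\Phi_h^{[S]}(x)=f_{h/2}(\mu+\xi_h)$ with $\xi_h\sim\mathcal N(0,\Omega_h)$. Expanding $\Omega_h=\sigma^2h+\sigma^2Ah^2+\tfrac23\sigma^2A^2h^3+O(h^4)$ shows that $\xi_h=O_p(h^{1/2})$.

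Next we Taylor expand $g:=f_{h/2}$ around $\mu$ in powers of $\xi_h$:
\[
\E[g(\mu+\xi_h)]=g(\mu)+\tfrac12 g''(\mu)\Omega_h+\tfrac18 g''''(\mu)\Omega_h^2+O(h^4).
\]
This works because the odd Gaussian moments vanish and $\E\xi^6=O(h^3)$ while $g^{(6)}=O(h)$, since every derivative of $g$ of order $k\ge 2$ carries at least a factor $h$. The same bookkeeping gives the variance:
\[
\V[g(\mu+\xi_h)]=g'(\mu)^2\Omega_h+\Bigl(g'g'''(\mu)+\tfrac12g''(\mu)^2\Bigr)\Omega_h^2+O(h^4).
\]
This uses $\E\xi^4=3\Omega^2$ and drops terms such as $g''^2\Omega^2\cdot h^{-1}$, which are $O(h^4)$ because $g''^2=O(h^2)$. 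The derivatives $g^{(k)}(\mu)$ come from differentiating the flow expansion in $y$, for example $g'=1+\tfrac h2N'+\tfrac{h^2}{8}(N'N)'+\dots$ and $g''=\tfrac h2N''+\tfrac{h^2}{8}(N'N)''+\dots$. Finally we re-expand every function evaluated at $\mu$ around $x$ using $\mu-x=h(A(x-b)+N)+O(h^2)$, keeping as many terms as the remaining power of $h$ allows. The $\sigma^4N''''/48$ term in $B_\theta$ comes from $\tfrac18g''''\Omega^2$ with $g''''\approx\tfrac h2N''''$. This is consistent with the target formula and serves as a first check.

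With both expansions in hand, we subtract order by order. The orders $h$ and $h^2$ must cancel: for the mean this is the known $O(h^3)$ result of \cite{parameterestimation}, and for the variance it reduces to $\sigma^2(A+N')$ on both sides. What remains at order $h^3$ is collected into $B_\theta(x)$ and $\Delta V_\theta(x)$. The polynomial growth assumptions \textbf{A3} together with Gaussian moments justify that all remainders are uniformly $O(h^4)$ for fixed $x$. The main obstacle is the $h^3$ bookkeeping itself: many cross terms must be matched, for instance the $A^2(x-b)^2N''$ terms generated by re-expanding $N(\mu)$, and sign errors are likely. To keep this under control, we will organise the comparison by monomial type: terms in $\sigma^0$, in $\sigma^2$ and in $\sigma^4$ separately, and within each by degree in $A$. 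We will also verify the result on the sanity case $N\equiv0$, where the scheme is exact and both $B_\theta$ and $\Delta V_\theta$ must vanish, and on the case $A=0$ with $N$ linear.
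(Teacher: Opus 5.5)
Your overall strategy is sound, but one intermediate formula you state explicitly is wrong, and following it literally would not give the theorem's coefficients. You write $\mu-x=h(A(x-b)+N(x))+O(h^2)$. But $\mu=\mu_h(f_{h/2}(x))$ has been through only half a nonlinear step, so
\[
\mu-x=h\bigl(A(x-b)+\tfrac12N(x)\bigr)+h^2\bigl(\tfrac18N'(x)N(x)+\tfrac12AN(x)+\tfrac12A^2(x-b)\bigr)+O(h^3).
\]
This is the paper's $\widetilde Q_h(x)$ in \eqref{eq: uh_expansion}; your leading term is instead that of the true increment $hF(x)$.

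All the $h^3$ cross terms you get by re-expanding $g'(\mu)$ and $g''(\mu)$ at $x$ depend on this quantity. With the correct value, $g'(\mu)=1+\tfrac h2N'+h^2\bigl(\tfrac38N''N+\tfrac12AN''(x-b)+\tfrac18N'^2\bigr)+O(h^3)$. Also $\tfrac12g''(\mu)\Omega_h$ contributes $h^3\sigma^2\bigl(\tfrac14AN''+\tfrac14AN'''(x-b)+\tfrac3{16}N'''N+\tfrac3{16}N''N'\bigr)$. These give the coefficients $\tfrac1{12}\sigma^2N''N$ in $\Delta V_\theta$ and $\tfrac1{48}\sigma^2N'''N$ in $B_\theta$. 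With your value they become $\tfrac7{12}$ and $\tfrac7{48}$. Inserting your value into $\tfrac h2N(\mu)$ inside $g(\mu)$ would even break the $h^2$ cancellation you cite.

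Both of your sanity checks have $N''\equiv0$, so neither can detect this error. A check that does is $A=0$ with general $N$. There $\Phi_h^{[S]}(x)=f_{h/2}(f_{h/2}(x)+\xi_h)$ can be expanded directly, and the theorem reduces to $B_\theta=\sigma^2\bigl(\tfrac1{48}N'''N-\tfrac1{16}N''N'\bigr)+\tfrac{\sigma^4}{48}N''''$ and $\Delta V_\theta=\sigma^2\bigl(\tfrac1{12}N''N-\tfrac16N'^2\bigr)+\tfrac{\sigma^4}{6}N'''$.

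Once this is corrected, your plan reproduces the theorem, and it differs from the paper's proof in a useful way. The true-moment side is the same as the paper's: Lemma \ref{lemma infinitesimal generator} applied to $y$ and $y^2$. For the Strang mean, you expand in $\xi_h$ about the deterministic point $\mu$, so only centred Gaussian moments appear, and then expand in $\mu-x$. The paper expands once about $x$ in $\widetilde Q_h(x)+\xi_h$ and needs the mixed moments $\E[(\widetilde Q_h+\xi_h)^n]$; this is the same bookkeeping in a different order.

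For the variance your route is genuinely shorter. The identity $\V[g(\mu+\xi_h)]=g'(\mu)^2\Omega_h+g'(\mu)g'''(\mu)\Omega_h^2+O(h^4)$ (the $\tfrac12g''^2\Omega_h^2$ term is already $O(h^4)$) gives the Strang variance $\sigma^2h+\sigma^2(A+N')h^2+h^3\bigl[\sigma^2\bigl(\tfrac23A^2+AN'+\tfrac12N'^2+\tfrac34N''N+AN''(x-b)\bigr)+\tfrac12\sigma^4N'''\bigr]$ in a few lines. The paper instead expands $\E[(X^{[S]}_{t_k})^2]$ and subtracts the squared mean, carrying many $x$-weighted terms that only cancel at the end.

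Two smaller points. First, your remark that terms like $g''^2\Omega^2h^{-1}$ are $O(h^4)$ is garbled, since that quantity is $O(h^3)$; no such term actually arises, though. Second, the parenthetical ``possible since $F\in C^5$'' is not right as stated. Lemma \ref{lemma infinitesimal generator} with $n=3$ requires $F\in C^{6}$, and your sixth-order Gaussian remainder uses $N^{(6)}$. With only $C^5$, the remainders you can honestly bound are $O(h^{7/2})$. The paper's proof has the same limitation: it invokes the lemma with $n=3$, and its Lagrange remainder $\tfrac{h}{240}N^{(5)}(\eta)(\widetilde Q_h+\xi_h)^5$ is only $O(h^{7/2})$ in expectation. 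So this does not separate your argument from the paper's.
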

\noindent

\section{Optimal Strang splitting}\label{Section: Optimal Strang splitting}
We (loosely) define a splitting to be optimal if it yields more accurate parameter estimates than all other choices of $A$ and $b$. Since the Strang pseudo-likelihood is a product of transition densities, we expect a splitting scheme that accurately approximates these transition densities to also provide an accurate estimator. Hence, we search for an optimal Strang splitting by studying the two error functions  $B_\theta(x)$ and $\Delta V_\theta(x)$ of Theorem \ref{nyt teorem}, which depend on $A$ and $b$, for a given model.

A natural and promising splitting in the general $d$-dimensional case is when the drift of the SDE in question has a fixed point, i.e., $x^*$ such that $F(x^*)=0$. We can then choose $A(X_t^{[1]}-b)$ in (\ref{linear split}) to be the first-order Taylor expansion of $F$ around $x^*$ by letting $A=DF(x^*)$ and $b=x^*$. Close to $x^*$, the dynamics will be dominated by the linear OU process and the pseudo-likelihood will be close to the true likelihood. This splitting was used in \cite{ditlevsen:2023, ditlevsen:2026} for the saddlenode-bifurcation model to estimate time to tipping in climate. 

If the eigenvalues of $DF(x^*)$ have negative real parts, the fixed point $x^*$ is stable, and the OU process is also mean-reverting towards $x^*$. Intuitively, the process will often be near the fixed point, making the Taylor linearization a good approximation.
If the model has multiple stable fixed points, each one locally acts as an attractor for initial values of the process within some neighbourhood of the fixed point called the basin of attraction. An adaptive splitting \eqref{eq: adaptive splitting likelihood} can then be constructed by linearizing around the fixed point that is the attractor for each $x_{t_k}.$ A similar splitting, where the analytic stable fixed points were substituted by the empirical means within basins of attraction, was applied in \cite{pilipovic:2025} for Kramers oscillator fitted to calcium ion concentrations in ice cores from Greenland.

For certain systems, for example systems exhibiting deterministic chaos, the splitting can also be done around unstable fixed points, in which case the OU process is repelled away from $x^*$. This splitting was used in \cite{parameterestimation} for the Lorenz system, where all fixed points are unstable. 

We will refer to splittings around fixed points as the \textit{fixed point linearization} scheme. 
However, the approximation far from the fixed point, as we will see, will for certain types of models be poor, in particular for models where the process spends extended time far from the equilibrium, such as excitable models. 
More generally, we linearize around  $c \in \R^d$
by setting $A=DF(c)$ and $b=c-A^{-1}F(c),$ assuming that $A$ is invertible, such that 
the drift of the OU becomes $A(X_t^{[1]}-b)=F(c)+DF(c)(X^{[1]}_t-c)$. For $d=1$, this makes for a natural study of the bias and variance functions of Theorem \ref{nyt teorem} as mappings that depend on
the centre of linearization $c\in\R$:
\begin{align}
    &(x,c)\mapsto B_\theta(x,c)   \label{eq: bias function}\\
    &(x,c)\mapsto \Delta V_\theta(x,c) \label{eq: variance function}
\end{align}
We are interested in the question of how $c$ can be chosen to minimize $|B_\theta(x,c)|$ and $| \Delta V_\theta(x,c) |$, respectively, for different $x$ in state space.

\subsection{Model classes}

We consider two model classes, which we will treat separately. First, we present evidence for a general strategy for splitting in one-dimensional \textit{potential models}, using the fixed point linearization scheme, and then we consider a two-dimensional \text{slow-fast} system - a case where the same fixed point linearization scheme is not the optimal choice. 

\paragraph{Potential models} We consider the multidimensional gradient SDE with $F_\beta = -\nabla U_\beta \in \mathbb{R}^{d}
$, such that
\[
dX_t = -\nabla U_\beta(X_t)\,dt + \Sigma  \, dW_t,
\qquad X_t\in\mathbb{R}^d,
\]
where \(U_\beta:\mathbb{R}^d\to\mathbb{R}\) is a smooth potential, $\Sigma \in\R^{d\times d}$, and the gradient operator is given by $(\nabla U_\beta(X_t))_{i} := \partial (U_\beta(x))/\partial x_i \, \Big |_{x=X_t} , i = 1, \ldots , d$. 

Assume that \(U_\beta\in C^2(\mathbb{R}^d)\) and that \(U_\beta\) is confining in the sense that
\[
U_\beta(x)\to\infty,
\qquad \text{as } |x|\to\infty,
\]
sufficiently fast so that
\[
\int_{\mathbb{R}^d}
\exp\left(-\frac{2U_\beta(x)}{\sigma^2}\right)\,dx < \infty.
\]
Then the process admits the invariant density \citep{thygesen}, \citep[Theorem 5.6]{applied}
\begin{equation}
\label{eq:invariantgeneral}
\pi(x)=\frac{1}{Z}
\exp\left(-\frac{2U_\beta(x)}{\sigma^2}\right),
\qquad 
Z=\int_{\mathbb{R}^d}
\exp\left(-\frac{2U_\beta(x)}{\sigma^2}\right)\,dx .
\end{equation}
Thus, local minima of \(U_\beta\) correspond to modes of the stationary distribution. The potential landscape determines metastability, tipping behaviour (i.e., bifurcations or noise-induced transitions between attractors), and transition paths.

In general, not every multidimensional SDE admits a potential formulation. A necessary condition is that the drift field is conservative:
\[
\frac{\partial (F_\beta)_i}{\partial x_j}
=
\frac{\partial (F_\beta)_j}{\partial x_i},
\qquad i\neq j,
\]
for all $i,j=1, \ldots , d$ (assuming sufficient smoothness and simply connected domain). Systems violating this contain rotational/non-gradient components and cannot globally be represented by a scalar potential.

\paragraph{Slow-fast excitable models} 
A general slow--fast excitable SDE with additive noise can be written as
\[
\begin{aligned}
dX_t &= \frac{1}{\varepsilon} f_\beta(X_t)\,dt
      + \frac{1}{\sqrt{\varepsilon}}\Sigma_X\,dW_t^{X},\\
dY_t &= g_\beta(X_t)\,dt
      + \Sigma_Y\,dW_t^{Y},
\end{aligned}
\qquad 0<\varepsilon\ll 1,
\]
where $(X_t^\top, Y_t^\top)^\top$ is the full state process, $F_\beta = (f_\beta^\top, g_\beta^\top)^\top$, $X_t \in \mathbb{R}^{d_x}$ denotes fast excitatory variables, $Y_t \in \mathbb{R}^{d_y}$ denotes slow recovery/adaptation variables, $d=d_x+d_y$, and $\varepsilon$ separates the time scales. The deterministic drift is assumed to have an excitable geometry: a stable fixed point coexists with a threshold manifold such that small perturbations across this threshold might generate a large transient excursion in state space before returning to the vicinity of the stable fixed point.

\subsection{A potential model: the double-well model}\label{subsection double-well}

The \textit{double-well potential  model} is the one-dimensional SDE with $U(x) = \frac{1}{\varepsilon}(\frac14 x^4 - \frac12 x^2 + xy)$, 
\begin{eqnarray}
\label{eq:DW}
    dX_t &=& \frac{1}{\e}(X_t-X_t^3-y)\, dt + \sigma \, dW_t,
    \quad X_0=x_0
\end{eqnarray}
for $\e,\sigma>0,y\in\R$. For $|y| < 2/3\sqrt{3}$, \eqref{eq:DW} has two stable fixed points separated by an unstable fixed point.
Figure \ref{Plots double-well model} shows the potential function $U$ as well as sample paths from the corresponding model \eqref{eq:DW} for different parameter values. The potential function is symmetric for $y=0$, whereas for $y\neq 0$, one of the two potential wells becomes deeper, making the corresponding stable fixed point the stronger attractor. 

\begin{figure}[h]
\begin{subfigure}[b]{0.49\linewidth}
    \includegraphics[height=7 cm]{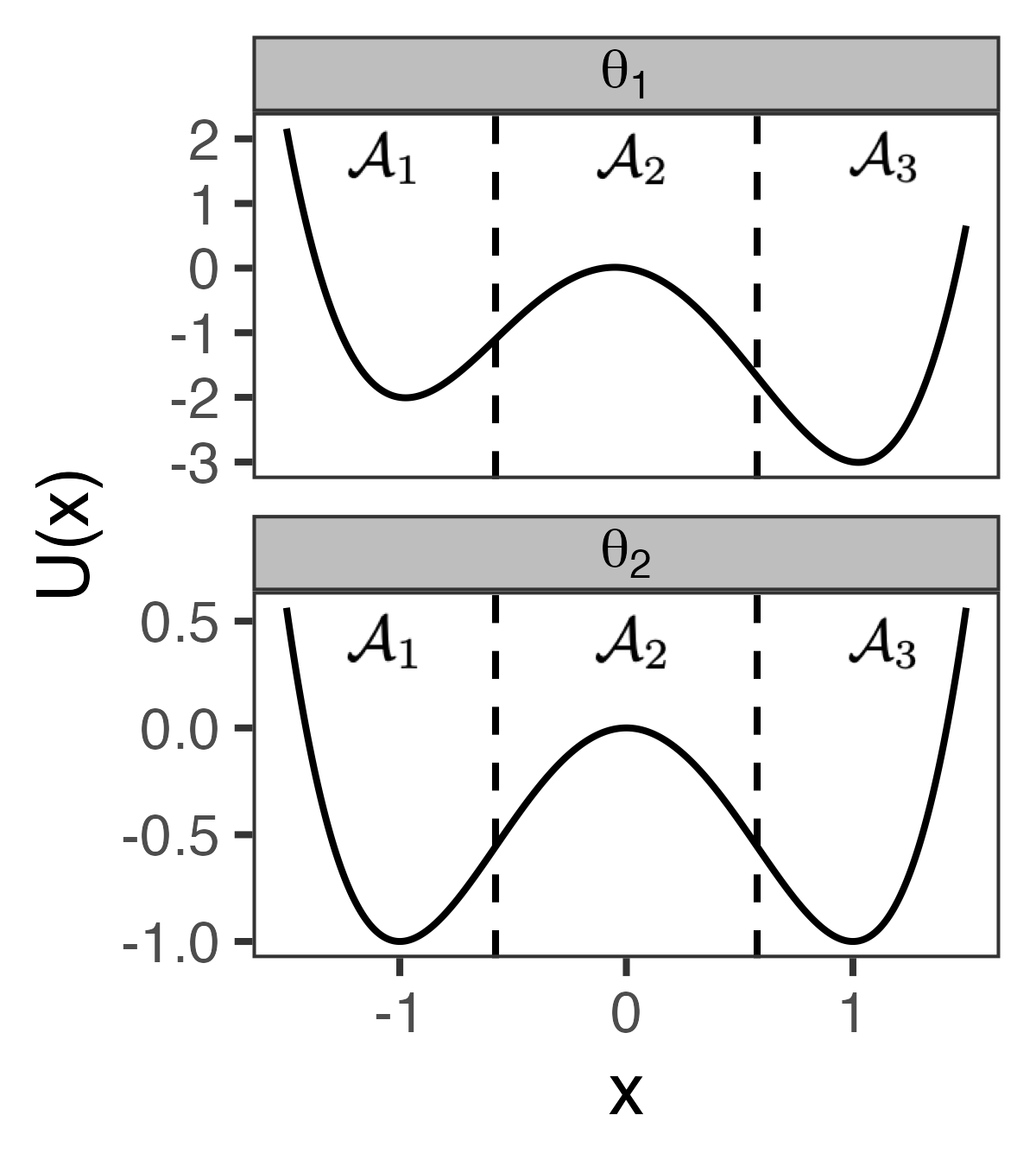}
 \end{subfigure}
 \hspace{-1.5 cm}
 \begin{subfigure}[b]{0.49\linewidth}
\includegraphics[height=7 cm]{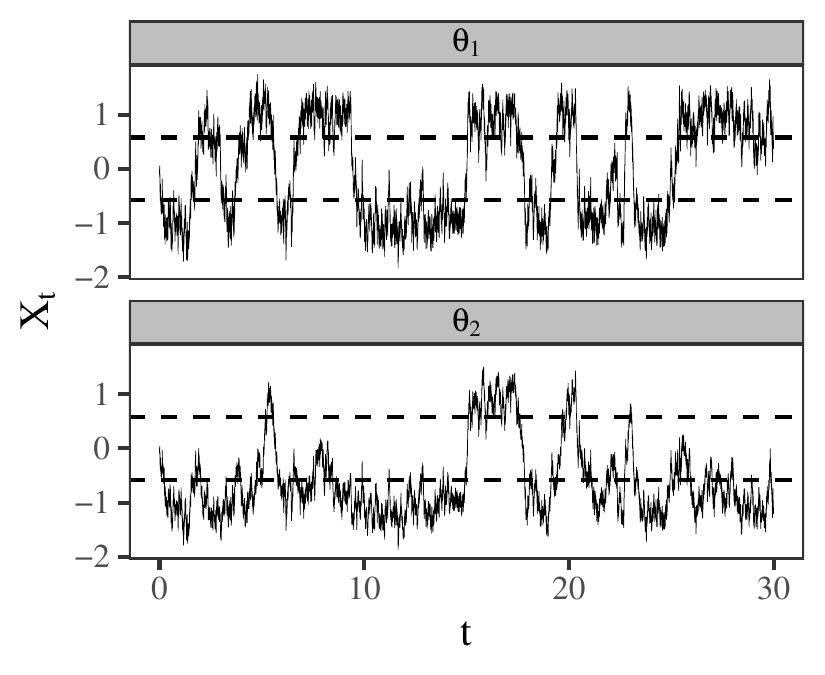}
 \end{subfigure}
\caption{\textbf{Potential functions and sample paths for the double-well potential model.} Parameters are $\theta_1=(\e_1,y_1,\sigma_1)=(0.1, -0.05, 1.7)$ and $\theta_2=(\e_2,y_2,\sigma_2)=(0.25, 0, 1.2)$.  
Both sample paths are simulated with the EM scheme using $h=0.0001$ and the same realization of the underlying Brownian motion. Vertical  dashed lines partition the state space into the three regions defined in \eqref{eq: partitioned state space}.}
    \label{Plots double-well model}
\end{figure}

Linearizing the drift in \eqref{eq:DW} around $c\neq \pm \frac{1}{\sqrt{3}}$ yields a well-defined drift in \eqref{linear split} given by
\begin{align*}
    &A = \frac{1}{\e}(1-3c^2),\\
    &b = c- \frac{c-c^3-y}{1-3c^2},
    \\
    &N(X_t) =
\frac{1}{\e}(-X_t^3+3c^2X_t-2c^3).
\end{align*}
For this model, we suggest the following fixed point linearization scheme. Specifically, denoting $x^-,x^\dagger$ and $x^+$ the negative stable, unstable, and positive stable fixed points, respectively, we use the splitting 
\begin{equation}
c=
\begin{cases}
x^-        & \text{if } X_t \in \mathcal{A}_1 \\
x^\dagger  & \text{if } X_t \in \mathcal{A}_2 \\
x^+        & \text{if } X_t \in \mathcal{A}_3
\end{cases}
\label{eq:fixed point scheme}
\end{equation}
where 
\begin{equation}\label{eq: partitioned state space}
\mathcal{A}_1=\left(-\infty,\frac{-1}{\sqrt{3}}\right],\,
\mathcal{A}_2=\left(\frac{-1}{\sqrt{3}},\frac{1}{\sqrt{3}}\right)
\text{ and }\mathcal{A}_3=\left[\frac{1}{\sqrt{3}},\infty\right).
\end{equation}
This is a natural choice as $\mathcal{A}_1$, $\mathcal{A}_2$ and $\mathcal{A}_3$ partition the state space into regions where the drift $F(x)$ is monotone, as $F'(x)$ vanishes at $x=\pm\frac{1}{\sqrt{3}}$. Thus, the direction of flow of the OU process, which is determined by the sign of $A$, always aligns with the direction of the true local flow.

\paragraph{One-step predictions} Theorem \ref{nyt teorem} states the presence of bias and error in variance of the Strang one-step prediction and that these depend on the choice of splitting (in addition to parameters, $h$ and the state of the process). To see that this is actually the case in practice, we compare empirical densities of one-step predictions of size $h=0.08$ from three different choices of linearization schemes in Figure \ref{plot: transition densities}. 
To assess the accuracy of the densities, we accumulate 2000 sub steps for each step $h$ from the EM scheme such that it is reasonable to assume that the result resembles the true distribution, given by the yellow curves. The red curve, which is the distribution from the fixed point linearization scheme defined in $\eqref{eq:fixed point scheme}$, seems very accurate. In the blue curve we instead use $x^+$ when $x_0\in\mathcal{A}_1$ and $x^-$ when $x_0\in\mathcal{A}_3$, which results in a very poor approximation of the transition density.

We also include a local Strang splitting that uses $c=x_0$. This seems like at natural choice for obtaining the best local linearization of the drift, but it carries an unintended consequence. Linearizing around $x_0$ means that we force $N(x_0)=0$ so the Strang composition will skip the first nonlinear step, reducing \eqref{eq: Strang flow} to $\Phi_h^{[S]}(X^{[S]}_{t_{k-1}}) 
    = f_{\frac{h}{2}} ( \mu_h ( X^{[S]}_{t_{k-1}}) + \xi_{h,k} 
    )$. 
The tails of the corresponding transition density, the green curve in Figure \ref{plot: transition densities}, do not align well with those of the true density, though the approximation of the mode is accurate.

\begin{figure}[h]
    \centering
     \includegraphics[height=7.5 cm]{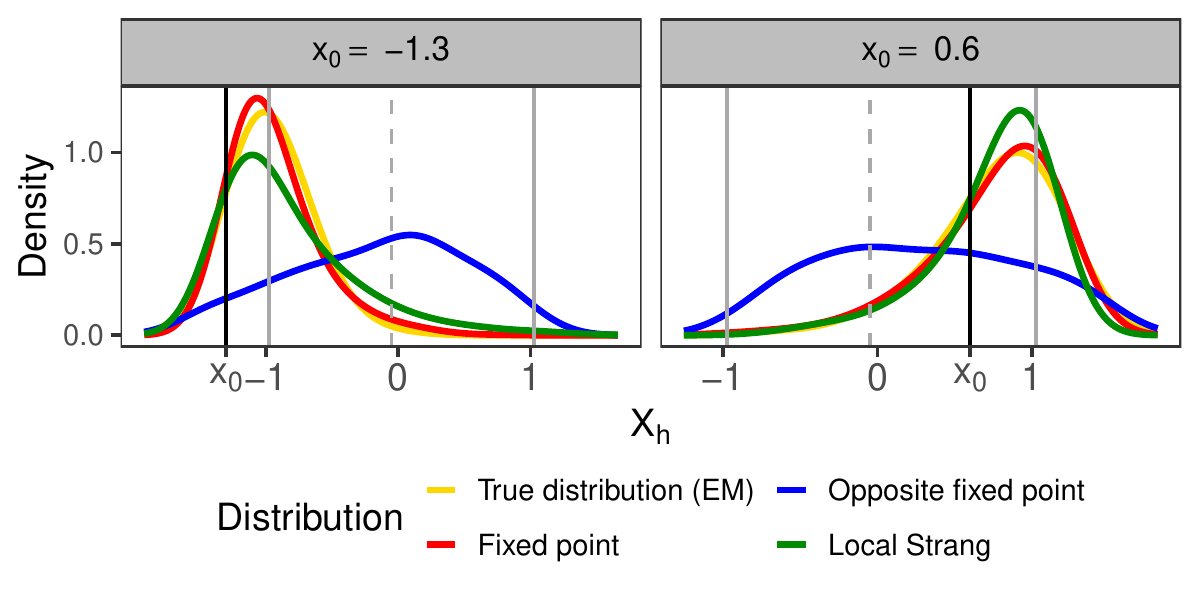}
    \caption{\textbf{Estimated transition densities of three Strang splitting schemes in the double-well potential model.}
    For each density, 2000 values with $h=0.08$ and the parameters $y=-0.05$, $\e=0.1$ and $\sigma=1.5$ are simulated, and a different initial value $x_0$ is used for each panel.
    The true distribution is simulated from the EM scheme with $h^{sim}=0.00004$. The fixed point approximation is done by linearizing around the fixed point of the same well as $x_0$, opposite fixed point approximation by linearizing around the fixed point of the other well, and local Strang is linearizing around $x_0$.
    Solid (resp. dashed) grey lines indicate stable (resp. unstable) fixed points.}
    \label{plot: transition densities}
\end{figure}

\ny 
To see more generally how the choice of linearization affects the Strang transition density, we will now turn our focus to  $B_\theta(x,c)$ and $\Delta V_\theta(x,c)$ from Theorem \ref{nyt teorem}. We plot the functions against each other for varying $x$ and $c$ in Figure \ref{figure: bias-variance tradeoff}. The curves show how these two error functions develop as the centre of linearization $c\in[-1.5,1.5]$ varies continuously. The $c$-values corresponding to the three fixed points as well as the minimizers of each function are marked with coloured dots.

\begin{figure}[h!]
    \centering

    \begin{subfigure}{\linewidth}
        \centering
        \includegraphics[width=\linewidth]{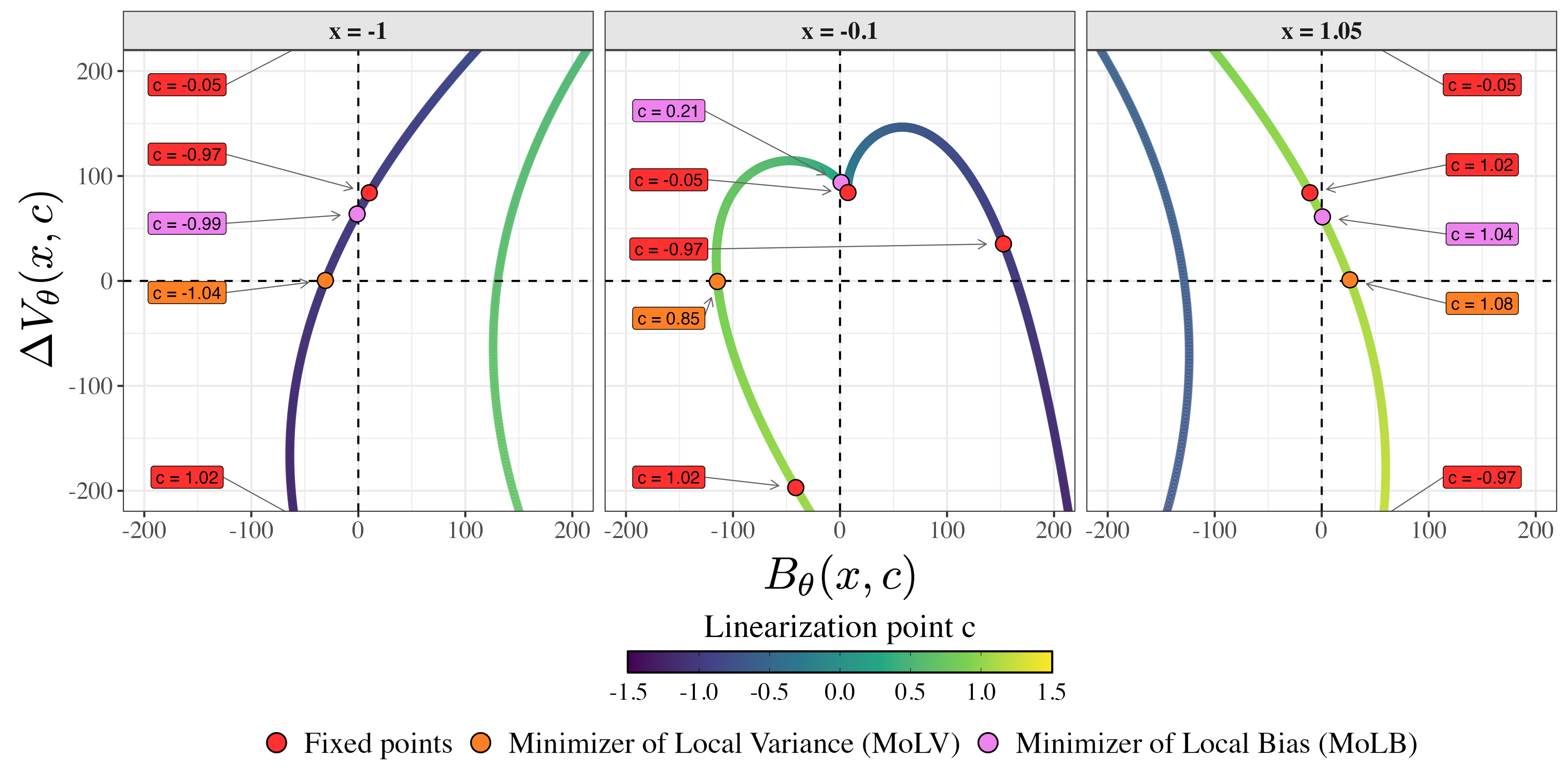}
        \caption{\textbf{Local errors.}
    The $h^3$-order bias $B_\theta(x,c)$ and difference in variance $\Delta V_\theta(x,c)$, both defined in Theorem \ref{nyt teorem}, are shown for continuously varying $c$. Each panel illustrates a different initial value of $x$.  
    }
    \label{figure: bias-variance tradeoff}
    \end{subfigure}

    \vspace{0.5cm}

    \begin{subfigure}{\linewidth}
        \centering
        \includegraphics[width=\linewidth]{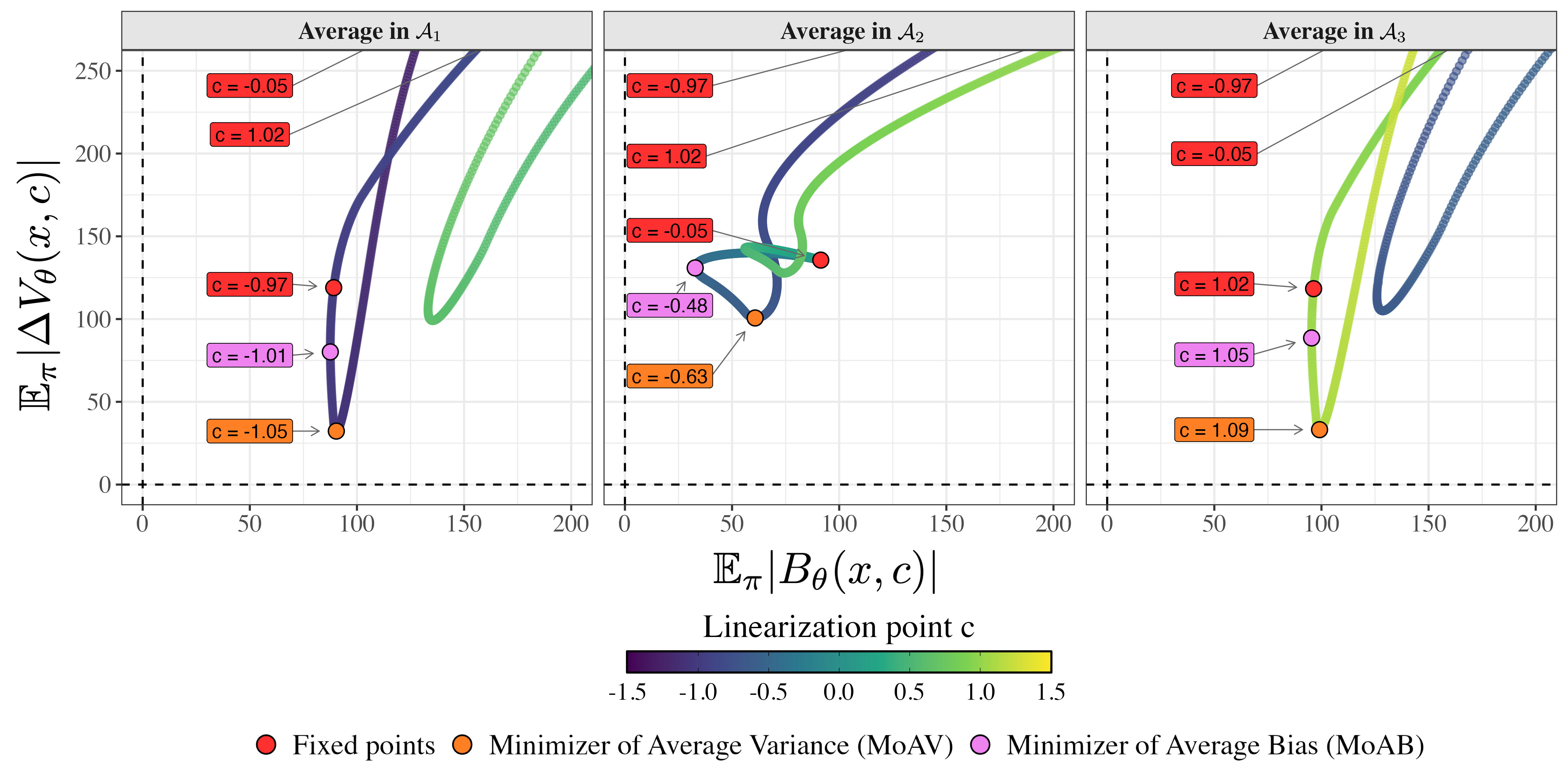}
        \caption{\textbf{Mean errors.}
    Expected values of $|B_\theta(X,c)|$ and $|\Delta V_\theta(X,c)|$ with respect to the invariant density $\pi$, conditional on the events $X\in\mathcal{A}_i$ for $i\in\{1,2,3\}$, are shown for continuously varying $c$.  
    }
    \label{figure: integrated bias-variance tradeoff}
    \end{subfigure}

    \caption{\textbf{Errors in expectation and variance of Strang one-step predictions for the double-well potential model.} Fixed points and minimizers are highlighted with coloured dots. The colour indicates the value of $c$. Parameters are $\e=0.1$, $y=-0.05$, and  $\sigma=1.7$.
    }
    \label{fig:strang_estimators}
\end{figure}

Three initial values of $x$ are chosen within each of the regions $\mathcal{A}_1$, $\mathcal{A}_2$ and $\mathcal{A}_3$ such that they are close to their respective fixed point. In each case, it can immediately be seen that $c$ chosen to be the local fixed point is very close to unbiased, indicated by the vertical dashed lines. This is not a coincidence, as we in fact have for all $\theta$ that if $x^*$ is any fixed point, then $B_\theta(x^*,x^*)=0$. Indeed, the bias function in \ref{nyt teorem}(i) reduces to $B_\theta(x^*,x^*)=\frac{\sigma^4}{48}N''''(x^*)-\frac{\sigma^2}{16}N''(x^*)N'(x^*)$. Specifically in the double-well model, both terms vanish, since $N'(x)=\frac{1}{\e}(3c^2-3x^2)$ and $N''''(x)=0$. 

\ny
Rather than evaluating in specific $x$-values in state space, it is of more interest to understand how $B_\theta(X_t,c)$ and $\Delta V_\theta (X_t,c)$ behave on average inside the three regions $\mathcal{A}_1$, $\mathcal{A}_2$ and $\mathcal{A}_3$ defined in \eqref{eq: partitioned state space}. We achieve this by utilizing that the double-well model is a potential model and thus has an invariant distribution given in \eqref{eq:invariantgeneral} with density
\begin{align}
\label{eq: invariant distribution}
    \pi(x) =
    \frac{1}{Z}\exp\left( \frac{1}{\e\sigma^2}\left(-\frac{1}{2}  x^4  + x^2 -2xy
    \right)\right).
\end{align}
By Theorem 1.15 of \cite{kutoyants}, the double-well potential model is ergodic so
\begin{align*}
        \frac{1}{T}\int_0^T |B_\theta(X_t,c)|\, dt \as \E( |B_\theta(X,c)| )\quad\text{and}\quad \frac{1}{T}\int_0^T |\Delta V_\theta(X_t,c)|\, dt \as \E( |\Delta V_\theta(X,c)| ),
    \end{align*}
as $T\to\infty$, where $X\sim \pi $ is a random variable with invariant density $\pi$, as given in \eqref{eq: invariant distribution}.
For every $c$ we can thus find the mean value of $|B_\theta(X_t,c)|$ and $|\Delta V_\theta(X_t,c)|$ across any sufficiently long trajectory $(X_t)_{t\in[0,T]}$ by computing $\int_\R |B_\theta(x,c)| \pi(x) dx$ and $\int_\R |\Delta V_\theta(x,c)| \pi(x) dx$.

For $i\in\{1,2,3\}$ we then condition on the event $X\in\mathcal{A}_i$ and consider mean values of $|B_\theta(X,c)|$ and $|\Delta V_\theta(X,c)|$ as functions of the centre of linearization $c$, i.e.,
\begin{align*}
    c\mapsto\E_\pi\left( |B_\theta(X,c)|  \mid X\in \mathcal{A}_i \right)
    &=\frac{1}{\int 1_{\mathcal{A}_i}(x)\pi(x)\, dx}
    \int 1_{\mathcal{A}_i}(x) |B_\theta(x,c)| \pi(x)\, dx
\end{align*}
and 
\begin{align*}
    c\mapsto\E_\pi\left( |\Delta V_\theta(X,c)|  \mid X\in \mathcal{A}_i \right)
    &=\frac{1}{\int 1_{\mathcal{A}_i}(x)\pi(x)\, dx}
    \int 1_{\mathcal{A}_i}(x) |\Delta V_\theta(x,c)| \pi(x)\, dx
    .
\end{align*}

\noindent
In Figure \ref{figure: integrated bias-variance tradeoff}, we plot these mappings against each other, again for continuously varying $c\in[-1.5,1.5]$. 
In the left and right panels, it is clearly seen that the best choices of $c$ with respect to both bias and variance of the Strang one-step predictions are found in the same region as the current value of the process. In the middle panel, negative $c$-values are slightly better than the positive - likely due to the negative value of the control parameter $y=-0.05$. Though the fixed points are not exactly optimal in either of the three events, the stable fixed points are close to the values that minimize the mean error functions within $\A_1$ and $\A_3$, while the unstable fixed point is a better choice in the event $X_t \in \A_2$. 
Thus,
Figures  \ref{plot: transition densities},  \ref{figure: bias-variance tradeoff} and \ref{figure: integrated bias-variance tradeoff}  all serve as evidence that the fixed point linearization scheme defined in \eqref{eq:fixed point scheme}
yields accurate transition densities in terms of expectation and variance. 

It is interesting to examine whether choosing the different minimizers in Figure \ref{figure: bias-variance tradeoff} and \ref{figure: integrated bias-variance tradeoff} will benefit the estimator. We implement these in a simulation study (see \ref{eq: MoLB}) but is should be noted that regardless of their accuracy, the usefulness of these methods will be limited in practice due to computational complexity and the fact that they are tedious to implement. Even if analytical expressions for the minima of \eqref{eq: bias function} and \eqref{eq: variance function} were available, adapting the splitting in every step does not scale well with higher dimensions, mainly due to the many recalculations of the inverse covariance matrix $\Omega_{h;x}$ in \eqref{eq: adaptive splitting likelihood}.

\paragraph{Estimators}
Motivated by the preceding analysis of one-step predictions, we will now in detail present the Strang estimators to be compared on their accuracy in parameter estimation. Once the splitting is chosen, we define the estimator as the minimizer of the pseudo-loglikelihood \eqref{eq: adaptive splitting likelihood}, and as we only consider Taylor linearization schemes, all estimators are uniquely determined by the choice of $c$.  The following schemes are implemented:

\ny
\underline{Piecewise schemes} where three centres of linearizations $c_1,c_2$ and $c_3$ are used in each of the regions $\mathcal{A}_1$, $\mathcal{A}_2$ and $\mathcal{A}_3$ defined in \eqref{eq: partitioned state space}. 

\ny
\begin{itemize}
    \item \textit{Fixed point linearization} as defined in \eqref{eq:fixed point scheme} and \textit{Wrong fixed point linearization} where $x^-$ and $x^+$ are interchanged.

    \item \textit{Minimizer of Average Bias }(MoAB) and \textit{Minimizer of Average Variance} (MoAV) defined, respectively, as 
    \begin{align}\label{eq: MoAB}
    &c_i\in\argmin_{c\in\R\setminus \{\pm \frac{1}{\sqrt{3}}\}} \E_\pi(|B_\theta(X,c)|\mid X\in\mathcal{A}_i)\qquad\text{for }i\in\{1,2,3\},\\
    &c_i\in\argmin_{c\in\R\setminus \{\pm \frac{1}{\sqrt{3}}\}} \E_\pi(|\Delta V_\theta(X,c)|\mid X\in\mathcal{A}_i)\qquad\text{for }i\in\{1,2,3\}. \nonumber
    \end{align}

    \item A \textit{Random splitting} where $c_1\sim\mathcal{U}(x_{min},-\frac{1}{\sqrt{3}})$, $c_2\sim\mathcal{U}(-\frac{1}{\sqrt{3}},\frac{1}{\sqrt{3}})$ and $c_3\sim\mathcal{U}(\frac{1}{\sqrt{3}},x_{max})$, where $x_{min}$ and $x_{max}$ are the minimum and maximum values in the sample. 
    
\end{itemize}

\ny
\underline{Local schemes} where the centre of linearization is updated in every step. 

\ny
\begin{itemize}
    \item \textit{Local Strang linearization} where $c_k=x_{t_{k-1}}$ for $k=1,\dots,N.$
    \item \textit{Minimizer of Local Bias }(MoLB) and \textit{Minimizer of Local Variance} (MoLV) defined, respectively, as
    \begin{align}\label{eq: MoLB}
        &c_{t_k}\in\argmin_{c\in\R\setminus \{\pm \frac{1}{\sqrt{3}}\}} |B_\theta(x_{t_{k}},c)|\qquad \text{for }k=1,\dots,N  ,\\
        &c_{t_k}\in\argmin_{c\in\R\setminus \{\pm \frac{1}{\sqrt{3}}\}} |\Delta V_\theta(x_{t_{k}},c)| \qquad \text{for }k=1,\dots,N. \nonumber
    \end{align}
\end{itemize}

\ny
If multiple minima exist for one of the methods in \eqref{eq: MoAB} or \eqref{eq: MoLB}, we select the value among them that minimizes the other error measure. For example, in Figure \ref{figure: bias-variance tradeoff} there are two values of $c$ with $|\Delta V_\theta(x,c)|=0$ for each $x$, and when employing the estimator MoLV, we would then choose the one with the lowest value of $|B_\theta(x,c)|$, which is also highlighted on the figure.

\paragraph{Simulation studies}
We compare the $8$ Strang estimators presented above in $M=1000$ Monte Carlo repetitions with trajectories sampled from the EM scheme using a small time step ($h^{sim}=0.0001$), and sub-sample to analyze different larger time steps. We generate estimates using the  \textit{Nelder-Mead} optimizer from the base \textsf{R} function \texttt{Optim} with starting parameters $(\e_0,y_0,\sigma_0)=(0.5,0,0.1)$ and default stopping criteria. Computations related to objective \eqref{eq: adaptive splitting likelihood} are done in \texttt{Rcpp} for speed-ups, in particular the nonlinear flow function $f_h$ and its derivative are computed with manually implemented Runge-Kutta steps (RK4) and simple central differences, respectively.

\ny
We carry out two simulation studies with the method described above with different parameter values in each experiment. These parameter values are shown in Table \ref{table: parameter values}.

\begin{table}[h!]
\centering
\begin{tabular}{ |c|c|c| } 
\hline
 & Experiment 1 & Experiment 2\\
 \hline
$\e$        & 0.1   & 0.25  \\ 
$y$         & -0.05 & 0  \\ 
$\sigma$  & 1.7   & 1.2   \\ 
 \hline
\end{tabular}
\caption{\textbf{The double-well potential model.} True parameter values used in the simulation studies.}
\label{table: parameter values}
\end{table}

\ny
The results are presented in Figure \ref{fig: double-well strang_estimators}. Comparison of the parameter estimates obtained using the different methods (Figure \ref{fig: double-well estimates}) indicates that the fixed point linearization approach provides accurate estimates, particularly for $y$, and, in Experiment 2, also for $\sigma$. In contrast, $\e$ appears to be more challenging to estimate accurately.

Among the estimators based on minimizing \eqref{eq: bias function} and \eqref{eq: variance function}, MoAV slightly outperforms MoAB. They are both noticeably more accurate than the local splittings MoLB and MoLV, particularly in terms of the variance of the estimates of $y$.

\begin{figure}[h!]
    \centering

    \begin{subfigure}{\linewidth}
        \centering
        \includegraphics[width=\linewidth]{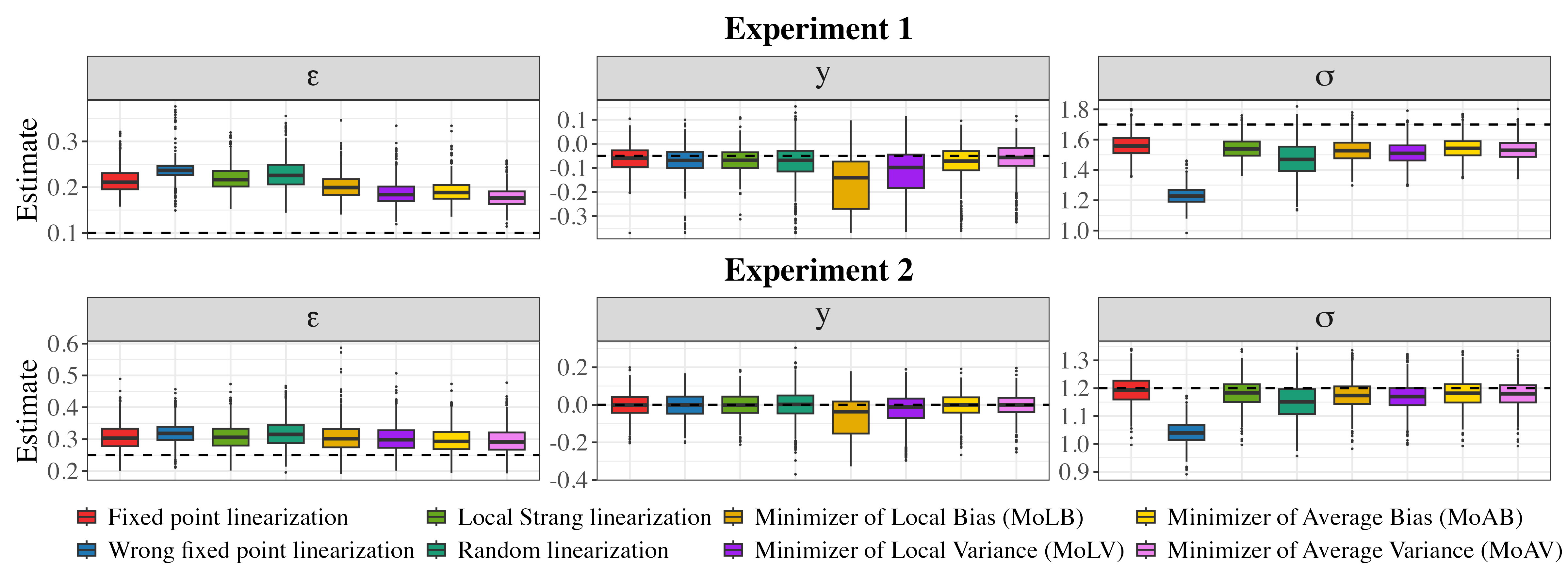}
        \caption{Parameter estimates of eight Strang splitting estimators with step-size $h=0.08$. Black dashed lines are true parameters.}
        \label{fig: double-well estimates}
    \end{subfigure}

    \vspace{0.5cm}

    \begin{subfigure}{\linewidth}
        \centering
        \includegraphics[width=\linewidth]{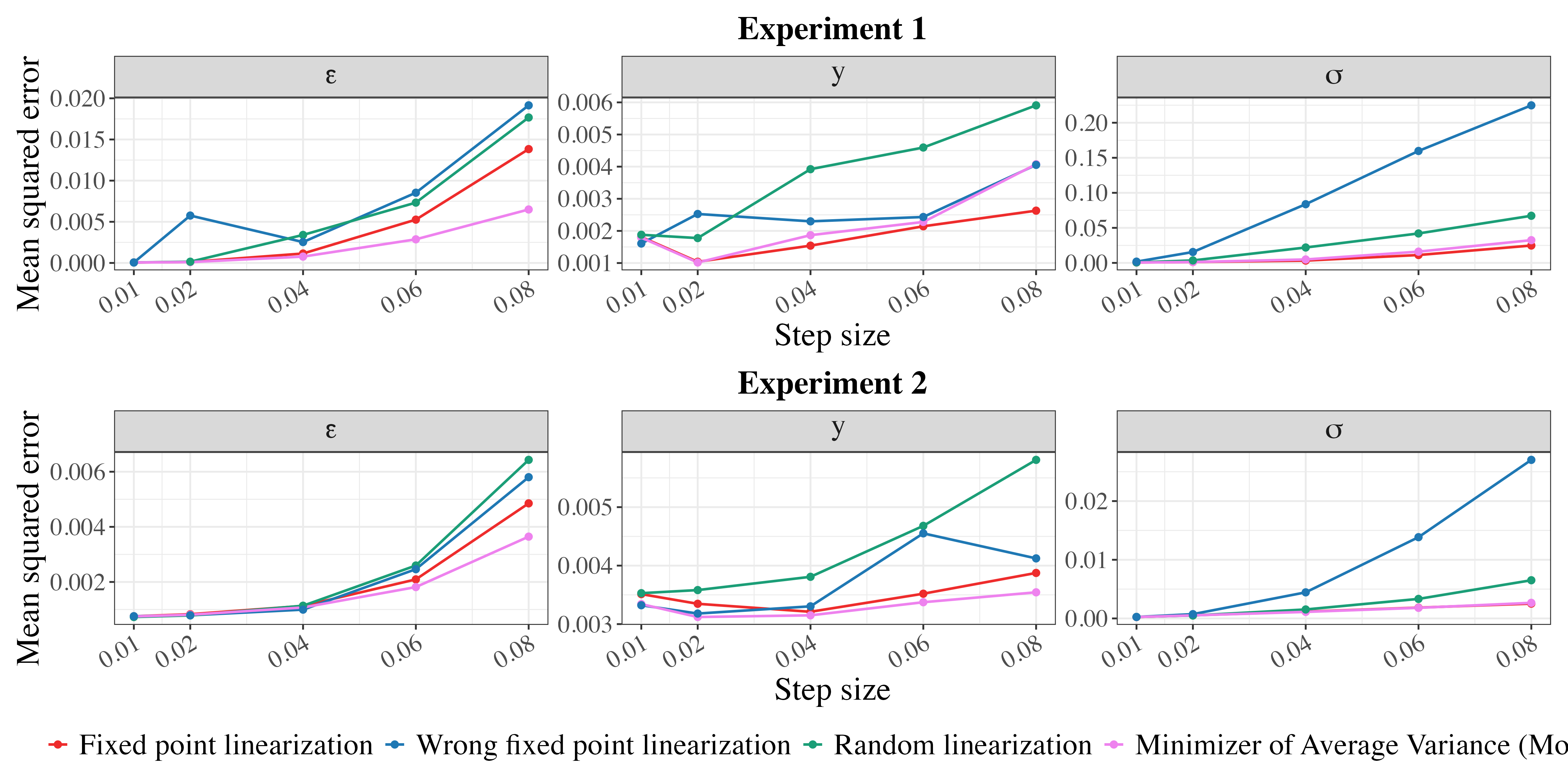}
        \caption{Mean squared error of four Strang splitting estimators for varying step sizes $h\in\{0.01,0.02,0.04,0.06,0.08\}$.}
        \label{fig: double-well mse}
    \end{subfigure}

    \caption{\textbf{Performance of Strang splitting estimators in the double-well potential model.}}
    \label{fig: double-well strang_estimators}
\end{figure}

As expected, the wrong fixed point splitting performs poorly, especially regarding estimates of $\sigma$, but also of $\e$. The random splitting estimator is also quite inaccurate, and these results show that the choice of splitting indeed makes a great impact on the accuracy of the Strang splitting estimator. Surprisingly, even these two methods yield accurate estimates of $y$, particularly in Experiment 2.

The difference in accuracy can be seen even more clearly in Figure \ref{fig: double-well mse}, which shows mean squared error (MSE)  for the estimators of each parameter compared between selected methods with varying step sizes $h$. The figure highlights the two best and the two worst estimators, respectively, and we see that in each case, either the fixed point linearization or MoAV yields the lowest MSE. In particular, the wrong fixed point splitting is sensitive to large step sizes, the error increasing especially fast for the estimates of $\sigma$.

\ny
One additional advantage of the fixed point linearization is that $\sigma$ can be estimated with the analytical expression
\begin{align}\label{eq: sigma estimate}
    \widehat{\sigma}^2 &=
    \frac{2A}{N\left( e^{2Ah}-1\right)} \sum_{k=1}^N  z_{t_k}^2,
\end{align}
the value of $\sigma^2$ that maximizes \eqref{eq:loglikelihood} for a given splitting. This improves the estimation of all parameters, as numerical optimization is performed in a lower-dimensional parameter space. However, the linearization methods that minimize the error measures $|B_\theta(x,c)|$ and $|\Delta V_\theta(x,c)|$ choose $c$ depending on $\sigma$, and the formula \eqref{eq: sigma estimate} depends on $c$ so it cannot be used with these methods. To facilitate a direct and unbiased comparison among the different splitting approaches, we numerically estimated $\sigma^2$ for all methods in the simulation studies, rather than employing the estimator in \eqref{eq: sigma estimate}.

\subsection{A slow--fast excitable system: the FitzHugh-Nagumo model}

The stochastic FitzHugh-Nagumo model is a two-dimensional extension of the double-well potential model given by the SDE
\begin{align*}
    dX_t &= \frac{1}{\varepsilon}(X_t-X_t^3-Y_t+\alpha)\, dt + \sigma_1 \, dW^{(1)}_t\\
    dY_t &= \gamma X_t+\beta-Y_t \, dt + \sigma_2 \, dW^{(2)}_t
\end{align*}
where $0<\varepsilon\ll 1$ is a time scale separation parameter, 
$\alpha ,\beta,\gamma\in\R$ and $\sigma_1,\sigma_2 >0$ are parameters and $W=(W^{(1)},W^{(2)})$ is a two-dimensional Brownian motion. We are particularly interested in parameter values for which a unique fixed point exists and is stable.

\begin{figure}[h]
\begin{subfigure}[b]{0.5\textwidth}
    \centering
    \includegraphics[width=\textwidth]{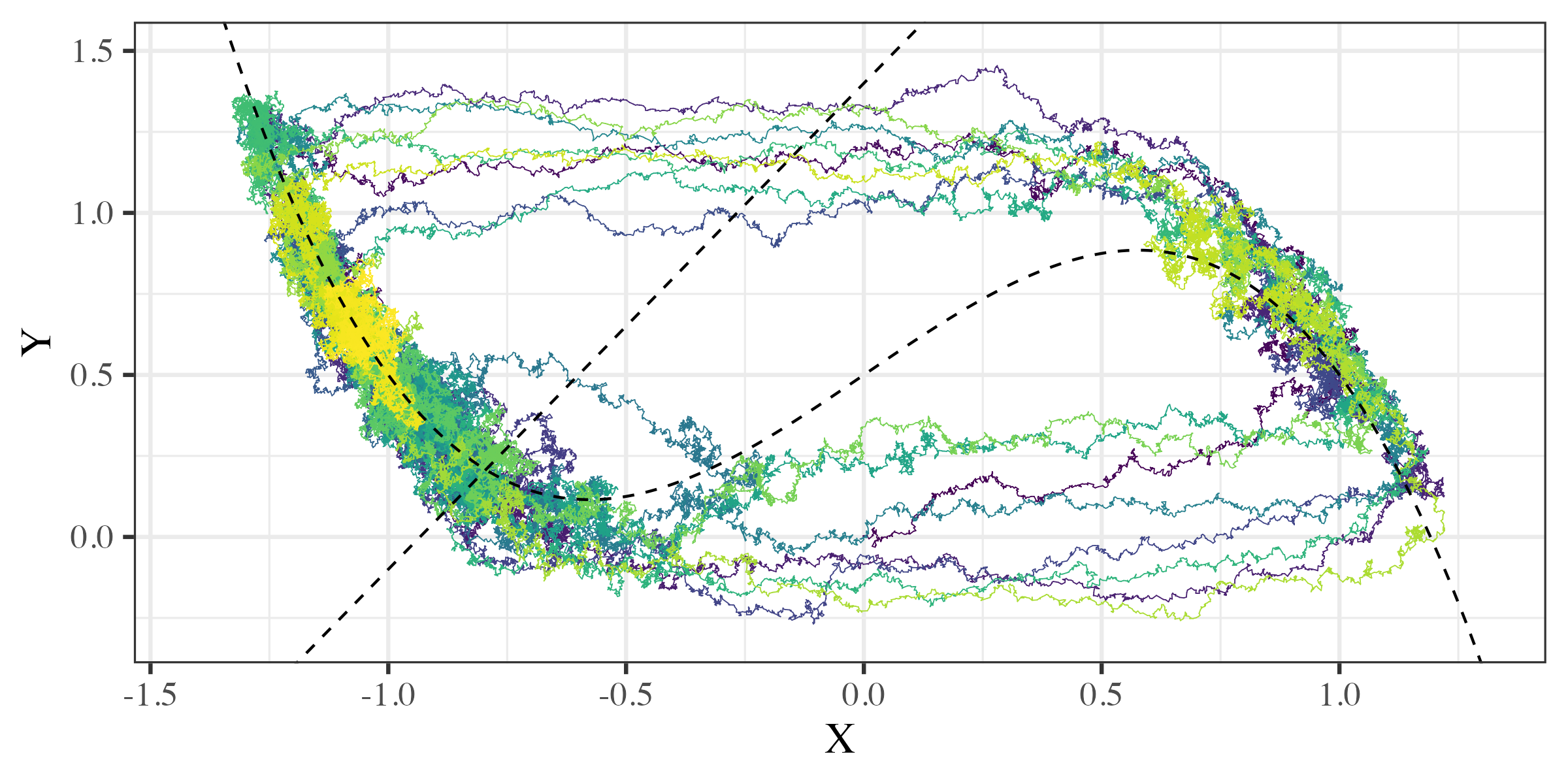}
    \caption{$\beta=1.4$, stable fixed point}
    \label{FHN sample path: fixed point is stable}
\end{subfigure}
\hfill
\begin{subfigure}[b]{0.5\textwidth}
    \centering
    \includegraphics[width=\textwidth]{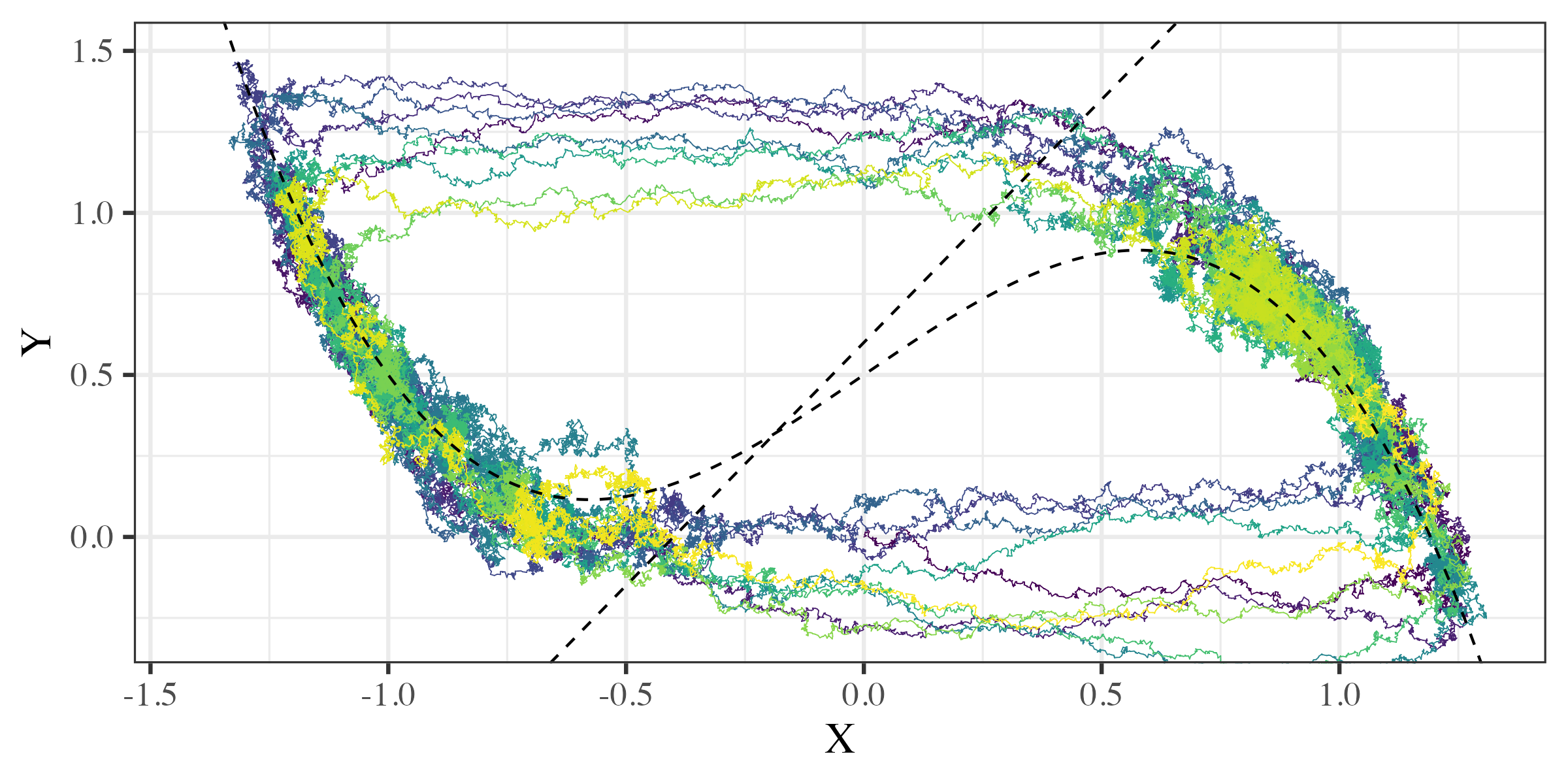}
    \caption{$\beta=0.6$, unstable fixed point}
    \label{FHN sample path: fixed point is unstable}
\end{subfigure}
    \caption{\textbf{Sample paths of the stochastic FitzHugh-Nagumo model}. Simulated from the EM scheme with $h=0.0001$, $T=20$ and parameters $\e=0.05$, $\alpha=0.5,\gamma=1.5$, $\sigma_1=0.3$, $\sigma_2=0.5$ and varying $\beta$. Dashed curves are nullclines.}
    \label{FHN model figure sample paths}
\end{figure}

\ny
A unique fixed point $(x^*,y^*)$ exists for $\gamma>1$ \citep{MCMC} and is then given by
\begin{align*}
    x^* &= \sqrt[3]{U_1}+\sqrt[3]{U_2},
    \quad 
    y^* = \gamma x^*+\beta,
\end{align*}
where $U_1=\frac{\alpha-\beta}{2}+\sqrt{D}, U_2 = \frac{\alpha-\beta}{2}-\sqrt{D}$ and $D=\frac{(\beta-\alpha)^2}{4}+\frac{(\gamma-1)^3}{27}$.
As the eigenvalues of a $2\times 2$ matrix $A$ have negative real parts if and only if $\det(A)>0$ and $\Tr(A)<0$ \citep{dynamics}, the fixed point is stable for
\begin{align*}
    \Tr\left( DF(x^*,y^*) \right) &= \frac{1}{\e}\left(  1-\e-3(x^*)^2   \right) <0,\\
    \det\left( DF(x^*,y^*) \right) &= \frac{1}{\e}\left(3(x^*)^2-1+\gamma\right) >0.
\end{align*}
The determinant is always positive when $\gamma>1$ so the fixed point is stable if and only if $|x^*|>\sqrt{\frac{1-\e}{3}}$. Figure \ref{FHN model figure sample paths} shows two sample paths corresponding to different parameter values, for which the fixed point, given by the intersection of the two nullclines, is respectively stable and unstable. Due to stochastic perturbations, the FitzHugh-Nagumo model is an excitable system making large excursions in state space even when the fixed point is stable. Such excursions occur when the process is perturbed below the left knee of the nullcline.

\ny
The splitting scheme with linearization around  $c=(c_1,c_2)\in\R^2$ with $b=c-A^{-1}F(c)$ is given by
\begin{align}
A
    &=
    \begin{pmatrix}
    \frac{1-3c_1^2}{\varepsilon}&-\frac{1}{\varepsilon}\\
    \gamma & -1
    \end{pmatrix}\nonumber\\
A\left(\begin{pmatrix}
        X_t\\ Y_t
     \end{pmatrix} -b\right)
     &= \begin{pmatrix}
        \frac{1}{\e}\left(2c_1^3-3c_1^2X_t+X_t-Y_t+\alpha\right) \\
        \gamma X_t - Y_t + \beta
      \end{pmatrix}
    \label{eq: FHN splitting: linear part A(x-b)}
    \\
N(X_t,Y_t) &= 
    \begin{pmatrix}
    \frac{1}{\e}(-X_t^3+3c_1^2X_t-2c_1^3)\\
    0
    \end{pmatrix} .
    \label{eq: FHN splitting: N(x,y)}
\end{align}
\noindent
When $\gamma>1,$ $A$ is invertible for all $c.$ The splitting does not depend on $c_2$, which is a consequence of $Y_t$ entering only linearly in $F$.

\ny
As the FitzHugh-Nagumo system is a two-dimensional model, the error functions from Theorem \ref{nyt teorem} are not available.
Instead, we visualize the accuracy of Strang one-step predictions in other ways.

\begin{figure}[h]
    \centering
    \includegraphics[width=\linewidth]{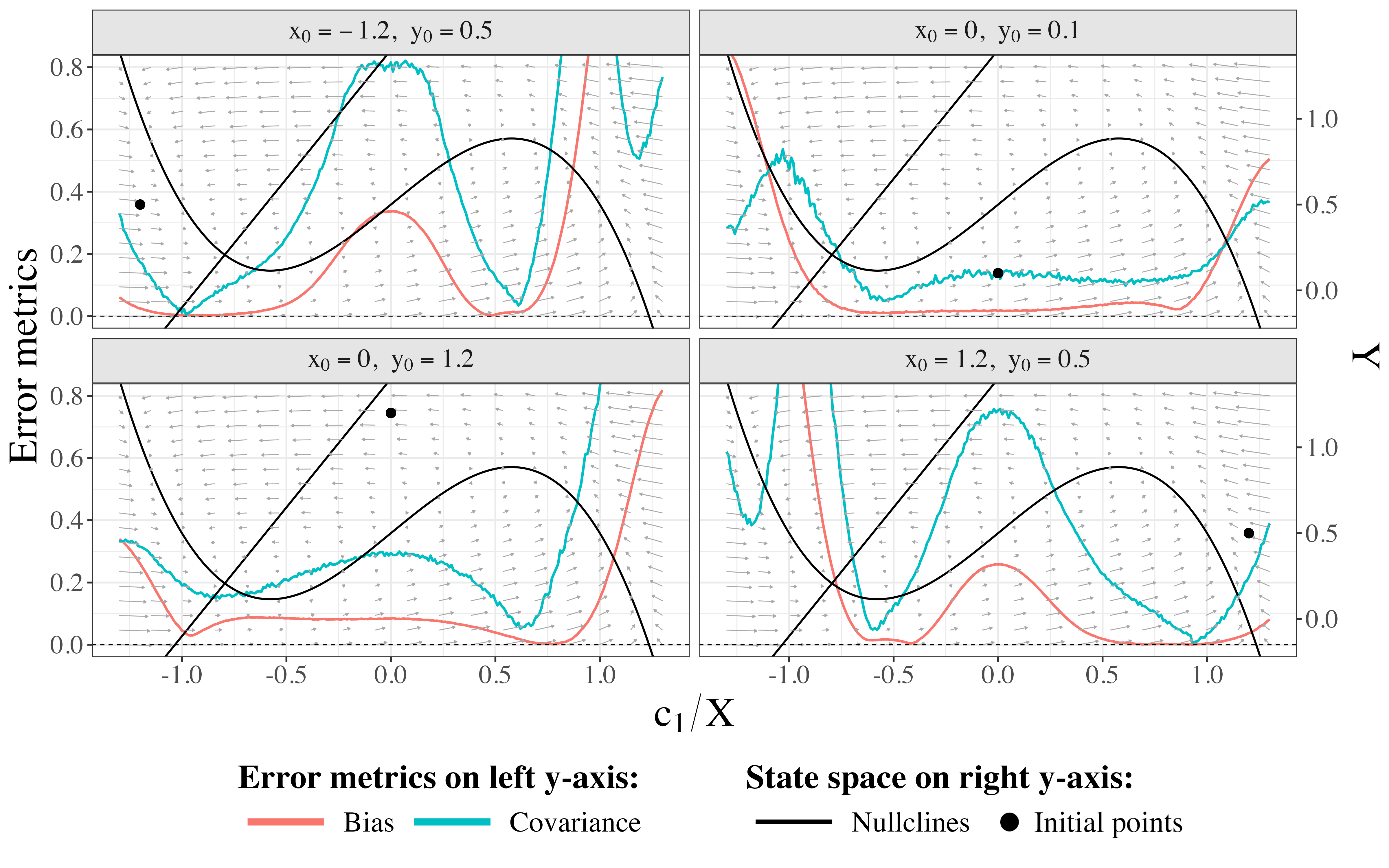}
    \caption{\textbf{Empirical errors of Strang steps of size $\mathbf{h=0.08}$ for varying centres of linearization $\mathbf{c=(c_1,c_2)}$ for the FitzHugh-Nagumo model.}
    The curves show empirical bias $\E(||X^{[S]}_{t}-X_t||)$ and relative deviation of the empirical covariance matrices $||\Cov(X^{[S]}_t)-\Cov(X_t)||_F/||\Cov(X_t)||_F$ conditional on $(x_0,y_0)$.
    Each plot uses a different initial value.}
    \label{plot, FHN: bias and covariance}
\end{figure}

Since the splitting scheme linearizing around $c$ does not depend on $c_2$, we only need to consider values of $c_1$ to compare splittings.
Given $c_1\in\R$ and an initial value $(x_0,y_0)\in\R^2$, we simulate $5000$ iid. Strang one-step predictions for a corresponding linearization with step size $h=0.08$.
We also simulated trajectories using the EM scheme with a substantially smaller step size, $h^{sim}=0.0001$, rendering the discretization error negligible. These simulations provide an approximation to the true distribution at time step $h$. We then compared their empirical mean and variance with those of the Strang one-step prediction.

The results are shown in Figure \ref{plot, FHN: bias and covariance}, where we plot empirically computed values of bias $\E(||X^{[S]}_{t}-X_t||)$ as well as deviation of covariance $||\Cov(X^{[S]}_t)-\Cov(X_t)||_F/||\Cov(X_t)||_F$, where  $||\cdot||_F$ is the Frobenius norm, for continuously varying $c_1$.

There are major differences in bias and covariances for different linearizations, yet it is not easy to determine a pattern. 
One important observation is that both the bias and the deviation of covariance are extremely large when using the fixed point linearization for the initial value $(x_0,y_0)=(1.2, 0.5)$.
Also, Figure \ref{plot, FHN: bias and covariance} suggests that $\pm \frac{1}{\sqrt{3}}$ are good choices of $c_1$, with the sign depending on the initial value. For $|x_0|=1.2$, the error metrics are also minimized close to either $c_1=1$ or $c_1=-1$.

\ny
We therefore propose the following adaptive splitting, choosing $c_1$ depending on the current value $(x,y)$ of the process:
\begin{align}\label{FHN adaptive splitting}
   (c_1,c_2)= \begin{cases}
(-1,\alpha) & \text{if } x < \frac{-1}{\sqrt{3}},\\
(\frac{-1}{\sqrt{3}},
\alpha)  & \text{if } x \in [\frac{-1}{\sqrt{3}}, \frac{1}{\sqrt{3}}], \ y\leq \alpha ,\\
(\frac{1}{\sqrt{3}},
\alpha)  & \text{if } x \in [\frac{-1}{\sqrt{3}}, \frac{1}{\sqrt{3}}], \ y> \alpha,\\
(1,\alpha)  & \text{if } x > \frac{1}{\sqrt{3}}.
\end{cases} 
\end{align}
With $c_1=\pm \frac{1}{\sqrt{3}}$ we obtain the same matrix $A$ as in \cite{buckwar_etal}, though the splitting proposed in that paper differs by not including the term $b$.

\begin{figure}[h]
    \centering
    \includegraphics[width=\linewidth]{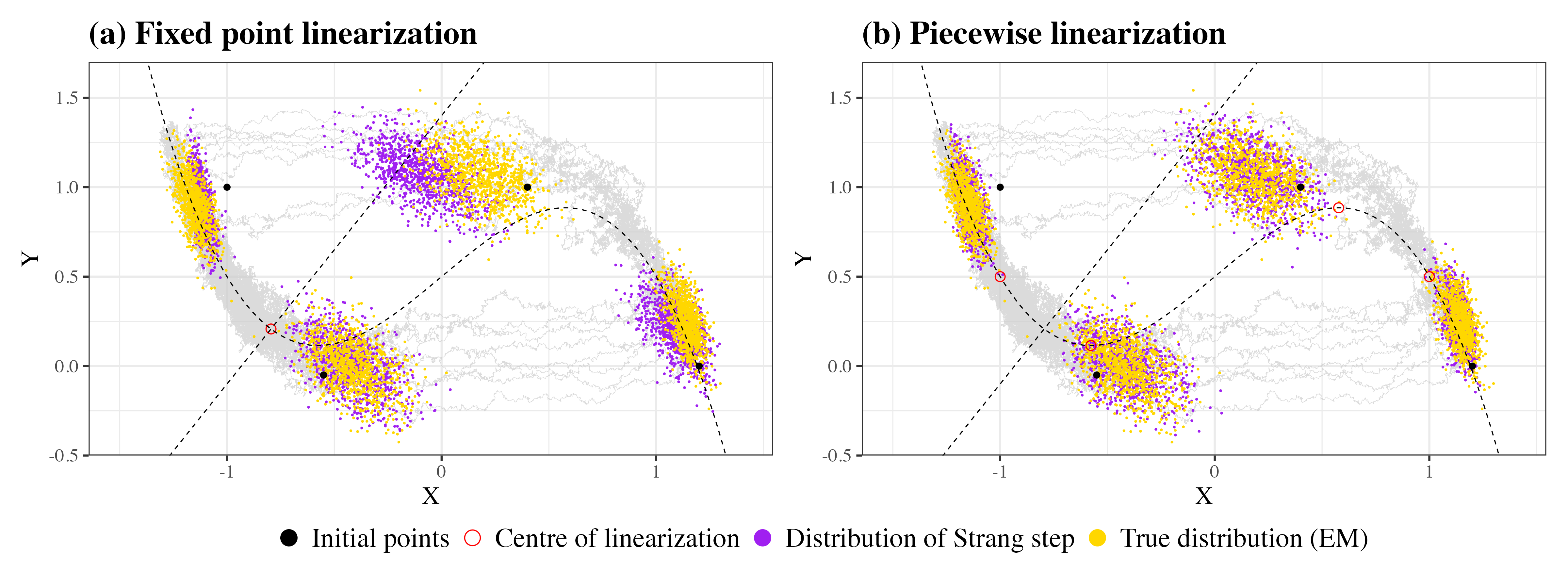}
    \caption{\textbf{Strang one-step predictions compared to true distributions for various starting points $\mathbf{(x_0,y_0)}$ for the FitzHugh-Nagumo model.}  The step size is $h=0.08$. True distributions are approximated by accumulated EM-steps of size $h^{sim}=0.0001$. Example trajectories (grey curves) and nullclines (dashed grey lines) are overlaid for reference.}
    \label{skyer: forskellige Strang one-predictions}
\end{figure}

\ny
To further investigate this, we plot clusters of simulated values from Strang splitting schemes and the true distribution, respectively, simulating one-step predictions as in Figure \ref{plot, FHN: bias and covariance}.
Figure \ref{skyer: forskellige Strang one-predictions} shows such clusters of 1000 simulated values for four different initial values.
When a cluster of Strang one-step predictions matches up well with a cluster of values from the true distribution, we conclude that the Strang one-step prediction is an accurate approximation of the transition density.

This figure verifies that the choice of splitting is an important factor for the accuracy of the Strang splitting scheme.
In particular, Figure \ref{skyer: forskellige Strang one-predictions}(a) shows that the fixed point linearization scheme performs well in large parts of state space but not everywhere.
On the other hand, the piecewise linearization scheme \eqref{FHN adaptive splitting} used in Figure \ref{skyer: forskellige Strang one-predictions}(b) yields accurate distributions for all four initial values.

\ny
In the terminology of slow--fast systems, the nullcline of $X$ is divided into three slow manifolds \citep[Definition 2.1.1]{berglund}. The two sections of the nullcline with $|x|>\frac{1}{\sqrt{3}}$ are stable slow manifolds while the section in between is unstable. An intuitive understanding of \eqref{FHN adaptive splitting} is that linearizing around a point on a stable slow manifold is accurate when the process is currently near the same manifold, and the points $(\pm 1,\alpha)$ lie on the stable slow manifolds for all $\theta$.

For a general slow-fast model, we thus propose to use an adaptive splitting with a linearization corresponding to each stable slow manifold. This implies that one must first identify the slow manifolds of the system and their stability and then specify a way of classifying whether the process is currently near one of the stable manifolds. For the FitzHugh-Nagumo system, this is easy since we only have to check the value of $X_t$ to determine if the process is currently moving along a stable slow manifold.

\paragraph{Simulation study}
We implement the Strang splitting estimator for the FitzHugh-Nagumo model and estimate parameters in two separate simulation studies with both a stable and an unstable fixed point, respectively (see Figure \ref{FHN model figure sample paths}), using the same computational methods as in Section \ref{subsection double-well}.
The true parameters in the case of a stable fixed point are $\e= 0.1$, $\alpha=0.5$, $\gamma=1.5$, $\beta=1.4$, $\sigma_1 = 0.3$ and $\sigma_2=0.5$, which results in the model being excitable. In the second simulation study, we instead use $\beta=0.6$, yielding an unstable fixed point and oscillatory behaviour of the system.

The results are shown in Figure \ref{fig: FHN estimates}.
For the excitatory model, the figure shows that in particular estimates of $\e$ and $\sigma_1$ are less biased with the piecewise linearization than with the fixed point linearization. The figure also shows MSE for each parameter for $h\in\{0.02, 0.04, 0.06 , 0.08\}$.

For the oscillatory model, the difference between the two estimators is even more pronounced, which is to be expected since the process in this case spends more time in those regions of state space where the fixed point linearization is inaccurate.
Figure \ref{fig: FHN estimates} shows that for the oscillatory model, the piecewise linearization yields a Strang splitting estimator that is more accurate for all six parameters, with the difference in performance between this scheme and the fixed point linearization growing as $h$ increases.
The piecewise linearization is in particular more accurate when estimating $\e$, $\gamma$ and $\sigma_1$ and the estimates of $\beta$ have a much smaller variance.

\begin{figure}[h!]
    \centering

    \begin{subfigure}{\linewidth}
        \centering
        \includegraphics[width=\linewidth]{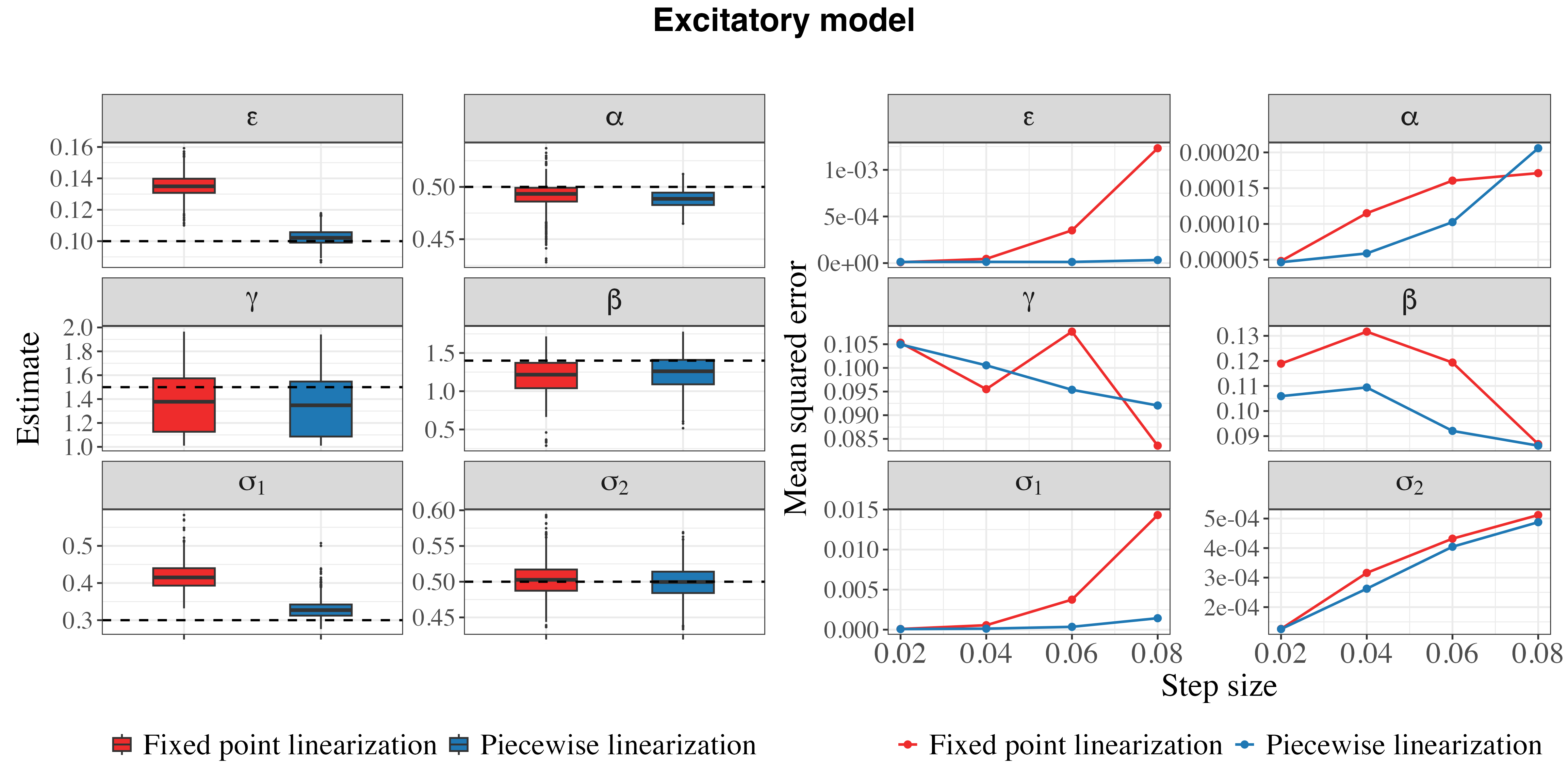}

        \label{fig: Excitatory estimates}
    \end{subfigure}

    \vspace{0.5cm}

    \begin{subfigure}{\linewidth}
        \centering
        \includegraphics[width=\linewidth]{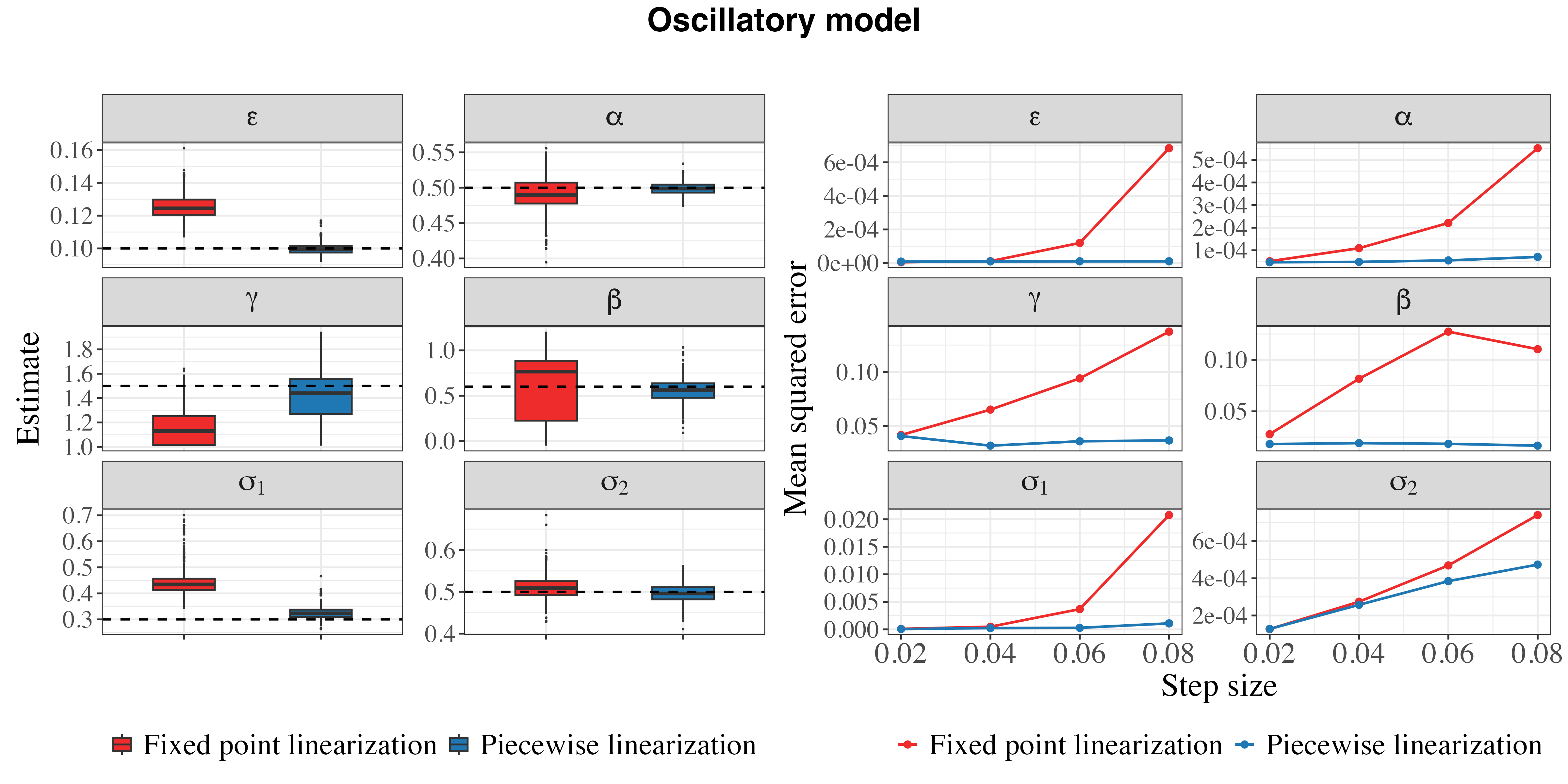}
     
        \label{fig: Oscillatory estimates}
    \end{subfigure}

    \caption{\textbf{Performance of Strang splitting estimators in the FitzHugh-Nagumo model.} Boxplots and mean squared errors are based on 1000 simulated data sets. Black dashed lines are true parameters.}
    \label{fig: FHN estimates}
\end{figure}

\section{Conclusion}

The Strang splitting estimator provides a computationally attractive framework for likelihood-based inference in nonlinear SDE models, but its practical performance depends critically on the chosen decomposition of the drift into linear and nonlinear components. In this paper, we have investigated this dependence both theoretically and empirically. By deriving higher-order expansions for the moments of the Strang one-step prediction in Theorem \ref{nyt teorem}, we showed that the approximation errors in both the conditional expectation and the conditional variance are of order $h^3$, and that the leading error terms depend explicitly on the selected splitting. These results provide a theoretical explanation for the substantial finite-sample differences previously observed between different Strang splittings.

For one-dimensional models, the error expansions lead naturally to the bias and variance functions $B_\theta$ and $\Delta V_\theta$, which can be used to assess the quality of a splitting independently of parameter estimation. In the double-well potential model, these quantities reveal that linearization around stable fixed points yields transition densities with small local errors over the regions of state space where the process spends most of its time. Consistent with this observation, the corresponding fixed point linearization scheme produced accurate parameter estimates and low mean squared errors across all simulation settings. Although splittings obtained by directly minimizing the derived error measures can occasionally produce marginal improvements, the gains were generally modest compared with the additional computational complexity required.

For ergodic models, the long-term averages of the bias and the error in the covariance across an entire sample path can be studied. For the double-well potential model, the mean bias is approximately minimized by choosing the Strang splitting adaptively as a linearization around the fixed points.
Along with simulation studies showing that we obtain accurate parameter estimates with this splitting scheme, this lends support to the use of the fixed point linearization scheme for potential models. 

Beyond its favorable approximation properties, the fixed point linearization scheme offers a considerable practical advantage through the existence of an explicit estimator of $\Sigma$. Since the pseudo-likelihood can be maximized analytically with respect to the diffusion coefficient for a fixed splitting, estimation can be performed in a lower-dimensional parameter space. This not only reduces computational cost but also tends to improve the robustness of the optimization procedure by mitigating identifiability issues and decreasing the risk of convergence to suboptimal local maxima. Consequently, the benefits of fixed point linearization arise both from more accurate transition-density approximations and from a simpler and more stable estimation problem.

For the stochastic FitzHugh-Nagumo system, we obtain better performance of the Strang splitting estimator by choosing an adaptive splitting depending on local dynamics rather than using the fixed point linearization scheme, even though the model has a unique stable fixed point. The FitzHugh-Nagumo model demonstrates that fixed point linearization is not universally optimal. In excitable slow-fast systems, trajectories frequently undergo large excursions away from equilibrium, so local dynamics near the stable fixed point are not representative of the regions most relevant to inference. For this model, a piecewise adaptive splitting based on the geometry of the slow manifolds yielded substantially more accurate approximations of the transition density and improved parameter estimation. This suggests that the most effective splitting should reflect the dominant dynamical structures of the system rather than solely its equilibria. We conjecture that similar results may hold in the wider class of excitable stochastic systems.

Taken together, our findings indicate that optimal splitting strategies are model dependent. For gradient and potential-type systems, fixed point linearization appears to be a robust and practically useful default choice, while for slow-fast excitable systems adaptive splittings guided by the underlying phase-space geometry may be preferable. 

Several directions for future work remain open. 
First, extending the moment-expansion results to multidimensional systems would provide a more systematic basis for selecting splittings in higher dimensions. Such an extension would require a more elaborate mathematical framework, as higher-order derivatives and expansions are most naturally expressed using multi-index or tensor notation, leading to substantially more complex expressions for the corresponding bias and covariance error terms. Second, we conjecture that the favorable performance of fixed point linearization observed for the univariate potential model extends straightforwardly to multivariate gradient systems, since in both settings the dynamics are predominantly governed by local behavior near stable minima of the potential, where the linear approximation is most accurate. Third, adaptive splittings based on slow manifolds, invariant manifolds, or other geometric structures deserve further investigation, particularly for excitable, oscillatory, and chaotic systems. Finally, it remains an open question whether a general notion of optimal splitting can be formulated in terms of information-theoretic criteria, such as minimizing the discrepancy between the true and approximate transition densities.

Overall, the results demonstrate that the choice of splitting is not merely a numerical detail but a central component of the inference procedure. Careful exploitation of model structure can substantially improve the accuracy of Strang splitting estimators and broaden their applicability to a wider class of nonlinear stochastic dynamical systems.

\section{Proofs}\label{section: proofs}

\ny
The proof of Theorem \ref{nyt teorem} relies on expanding the moments of $\Phi^{[S]}_h(X_{t_k})$. For this purpose we extend the approximation of the flow function $f_h$ \citep[Proposition 2.2]{parameterestimation} up to order $h^3.$ 
\begin{lemma}\label{lemma udvidelse af f}
    Assume that $f_h$ is $C^4.$ Then for all $x\in\R,$
    \begin{align*}
        f_h(x)=x+hN(x)+\frac{h^2}{2}N'(x)N(x) 
        +\frac{h^3}{6} \left( N''(x)N(x)^2+N'(x)^2N(x)\right)
        +O(h^4).
    \end{align*}
\end{lemma}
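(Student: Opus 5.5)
We propose to prove the expansion by Taylor expanding $h \mapsto f_h(x)$ around $h=0$, using that $f_h$ is the flow of the autonomous ODE $\dot y = N(y)$, $y(0)=x$ from \eqref{nonlinear split}. Write $y(h)=f_h(x)$. Taylor's theorem with remainder gives
\[
f_h(x)=y(0)+h\,y'(0)+\frac{h^2}{2}y''(0)+\frac{h^3}{6}y'''(0)+O(h^4),
\]
provided $h\mapsto y(h)$ is $C^4$ with a bounded fourth derivative on a neighbourhood of $h=0$ for fixed $x$. This regularity follows because $N=F-A(\cdot-b)$ is $C^3$ by \textbf{A1}. Then $y'=N(y)$ is $C^4$ in $h$ by the chain rule. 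Alternatively, we can simply invoke the hypothesis that $f_h$ is $C^4$. The remainder is uniform for $h$ in a compact interval, since the trajectory stays in a compact set for small $h$.

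The key step is computing the time derivatives at $h=0$ by repeatedly differentiating the ODE. First, $y'=N(y)$. Next,
\[
y''=N'(y)\,y'=N'(y)N(y).
\]
Then, by the product and chain rules,
\[
y'''=\frac{d}{dh}\big(N'(y)N(y)\big)=N''(y)y'\,N(y)+N'(y)N'(y)y'=N''(y)N(y)^2+N'(y)^2N(y).
\]
Evaluating these at $h=0$, where $y(0)=x$, gives exactly the coefficients in the statement: $N(x)$, $\tfrac12 N'(x)N(x)$, and $\tfrac16\big(N''(x)N(x)^2+N'(x)^2N(x)\big)$.

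The only step needing care is justifying the $O(h^4)$ remainder. We will use the Lagrange form $\frac{h^4}{24}y''''(\tau)$ with $\tau\in(0,h)$, which requires $N\in C^3$ so that
\[
y''''=N'''(y)N(y)^3+\dots
\]
exists and is continuous. Boundedness on $[0,h_0]$ follows from continuity of $y''''$ on that compact interval, since local existence of the flow is guaranteed by \textbf{A2}–\textbf{A3}. This is routine, so we expect no real obstacle beyond careful bookkeeping of the chain rule.
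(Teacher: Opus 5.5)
Your proposal is correct and follows the same route as the paper: Taylor-expand $h\mapsto f_h(x)$ at $h=0$, and obtain $y''=N'(y)N(y)$ and $y'''=N''(y)N(y)^2+N'(y)^2N(y)$ by repeatedly differentiating $y'=N(y)$. Your Lagrange-remainder argument for the $O(h^4)$ term only makes explicit what the paper's short proof leaves implicit.
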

\begin{proof}
Assume $X=(X_t)_{t\geq 0}$ solves the ODE $X_t'=N(X_t)$ and let $t\geq0$ be given.
We Taylor expand $h \mapsto X_{t+h}$ around 0 and get
\begin{align*}
    f_h(X_t) 
    &= f_0(X_t) + h \frac{d}{d h}f_h(X_t)\Big|_{h=0}
    + \frac{h^2}{2} \frac{d^2}{dh^2}f_h(X_t)\Big|_{h=0}
    + \frac{h^3}{6} \frac{d^3}{dh^3}f_h(X_t)\Big|_{h=0}
    +O(h^4)\\
&= X_t + hN(X_t) + \frac{h^2}{2}N'(X_t)N(X_t)
    + \frac{h^3}{6}
    \Big( N''(X_t)N(X_t)^2+N'(X_t)^2N(X_t) \Big)
    + O(h^4).
\end{align*}
\end{proof}

\ny
\begin{proof}[Proof of Theorem \ref{nyt teorem}]
    The proof is similar to that of Proposition 3.6 in \cite{parameterestimation}.
    \\
(i)
The Taylor expansion of $\mu_h (y) = e^{Ah}y + \left(1-e^{Ah}\right)b$ around $h=0$ is
\begin{align*}
    \mu_h\left( y \right) 
    &= y + hA\left(y-b\right)
    + \frac{h^2}{2}A^2\left(y-b\right)
    + \frac{h^3}{6}A^3\left(y-b\right)+ O(h^4)
\end{align*}
so expanding each term up to order $h^3,$ using the expansion of $y=f_{\frac{h}{2}}(x)$ from Lemma \ref{lemma udvidelse af f}, we get
\begin{align}
\label{eq: uh_expansion}
     \mu_h\left( f_{\frac{h}{2}}(x) \right)
     &=
    x + h \left(\frac{1}{2}N(x)+A(x-b)\right)
    + h^2 \left( \frac{1}{8}N'(x)N(x) + \frac{1}{2}AN(x)+\frac{1}{2}A^2(x-b)\right) \nonumber \\
    &\quad + h^3 \left(  \frac{1}{48}\big(N''(x)N(x)^2+N'(x)^2N(x)\big) + \frac{1}{8}AN'(x)N(x) + \frac{1}{4}A^2N(x) + \frac{1}{6}A^3(x-b)    \right) \nonumber \\
    &\quad + O(h^4).
\end{align}
Defining $\widetilde{Q}_h(x)= \mu_h( f_{\frac{h}{2}}(x) )-x$, we have
\begin{align*}
    \E\left[ X^{[S]}_{t_k} \ | \ X^{[S]}_{t_{k-1}}=x \right]
    &= \E\left[ f_\frac{h}{2}\left( \mu_h\big( f_{\frac{h}{2}}(x) \big) + \xi_h\right)  \right]= \E\left[f_{\frac{h}{2}}(x+\widetilde{Q}_h(x)+ \xi_h) \right].
\end{align*}
Lemma \ref{lemma udvidelse af f} then yields
\begin{align}\label{eq: fmuf før expectation}
    \nonumber
    f_{\frac{h}{2}}(x+\widetilde{Q}_h(x) + \xi_h)
    &= x+\widetilde{Q}_h(x)+ \xi_h
    + \frac{h}{2}N(x+\widetilde{Q}_h(x)+ \xi_h)\\
    \nonumber
     &
    +\frac{h^2}{8} N'(x+\widetilde{Q}_h(x)+ \xi_h)N(x+\widetilde{Q}_h(x)+ \xi_h)\\
    \nonumber
    & + \frac{h^3}{48}\Big( N''(x+\widetilde{Q}_h(x)+ \xi_h)N(x+\widetilde{Q}_h(x)+ \xi_h)^2\\
     &\quad\quad\quad+ N'(x+\widetilde{Q}_h(x)+ \xi_h)^2N(x+\widetilde{Q}_h(x)+ \xi_h)\Big)
    + O(h^4).
\end{align}
Before taking the expectation of \eqref{eq: fmuf før expectation},
we first find the order of  $\E[(\widetilde{Q}_h(x)+\xi_h)^n]$ for a given $n.$ Since the odd moments of $\xi_h$ are zero, $\E[\xi_h^2]=h\sigma^2+h^2A\sigma^2+O(h^3)$ (see \eqref{eq: Omega_h}) and $\widetilde{Q}_h(x)= O(h)$ (see \eqref{eq: uh_expansion}), we have
\begin{align*}
    \E\left[(\widetilde{Q}_h(x)+\xi_h)\right] &= \widetilde{Q}_h(x)= O(h),\\
    \E\left[(\widetilde{Q}_h(x)+\xi_h)^2\right] &= \widetilde{Q}_h(x)^2 + h\sigma^2+ h^2A\sigma^2 +O(h^3)= O(h),\\
    \E\left[(\widetilde{Q}_h(x)+\xi_h)^3\right] &= 
    \widetilde{Q}_h(x)^3 +3 \widetilde{Q}_h(x)(h\sigma^2+h^2A\sigma^2)
    =O(h^2),\\
    \E\left[(\widetilde{Q}_h(x)+\xi_h)^4\right] &=
    3h^2\sigma^4 + O(h^3) =   O(h^2).
\end{align*}
For $n\geq 5$, the expected value is $O(h^3).$ We now find the expectation of each term in \eqref{eq: fmuf før expectation}.
Taylor expanding $N$ and using assumption \textbf{A3}
to bound the Lagrange remainder yields
\begin{align*}
    \E\left[\frac{h}{2}N(x+\widetilde{Q}_h(x)+ \xi_h) \right]
    &= \frac{h}{2}\E\Big[
    N(x) + N'(x)(\widetilde{Q}_h(x)+ \xi_h ) 
    + \frac{1}{2}N''(x)(\widetilde{Q}_h(x)+ \xi_h)^2\\
    &\quad + \frac{1}{6}N'''(x)(\widetilde{Q}_h(x)+ \xi_h)^3+\frac{1}{24}N''''(x)(\widetilde{Q}_h(x)+ \xi_h)^4  \Big]
    +O(h^4)\\
     &= \frac{h}{2}\Bigg(N(x) + N'(x)\widetilde{Q}_h(x)
    + \frac{1}{2}N''(x)\left(\widetilde{Q}_h(x)^2+h \sigma^2+h^2A\sigma^2 \right)\\
    &\quad + \frac{1}{2}N'''(x)\widetilde{Q}_h(x)h\sigma^2 + \frac{1}{8}N''''(x)h^2\sigma^4
    \Bigg)       +O(h^4)\\
    &\stackrel{\eqref{eq: uh_expansion}}{=} \frac{h}{2}N(x) + \frac{h^2}{2}N'(x)\left(\frac{1}{2}N(x)+A(x-b)\right) 
    \\& \quad
    + \frac{h^3}{2}N'(x)\left(\frac{1}{8}N'(x)N(x) + \frac{1}{2}AN(x)+\frac{1}{2}A^2(x-b)\right)\\
    &\quad + \frac{h^3}{4}N''(x)\left(\frac{1}{2}N(x)+A(x-b)\right)^2+ \frac{h^2}{4}N''(x)\sigma^2 + \frac{h^3}{4}AN''(x)\sigma^2\\
    &\quad + \frac{h^3}{4}N'''(x)\sigma^2 \left(\frac{1}{2}N(x)+A(x-b)\right)
    + \frac{h^3}{16}N''''(x)\sigma^4
    + O(h^4).
\end{align*}
Similarly, expansions of $N'$ and $N$ yield
\begin{align*}
    &\frac{h^2}{8}\E \left[ N'\big(x+\widetilde{Q}_h(x)+\xi_h\big)N\big(x+\widetilde{Q}_h(x)+\xi_h\big) \right]\\
    &=
    \frac{h^2}{8}
    \E\bigg[  \left(N'(x)+N''(x)\big(\widetilde{Q}_h(x)+\xi_h\big)+\frac{1}{2}N'''(x)\big(\widetilde{Q}_h(x)+\xi_h\big)^2 \right) \\
    &\qquad\qquad \cdot
     \left( N(x)+N'(x)\big(\widetilde{Q}_h(x)+\xi_h\big)+\frac{1}{2}N''(x)\big(\widetilde{Q}_h(x)+\xi_h\big)^2
    \right)            \bigg] + O(h^4)\\
    & =
    \frac{h^2}{8} N'(x)N(x) + \frac{h^3}{8} N''(x)N(x)\left(\frac{1}{2}N(x)+A(x-b)\right)    
    + \frac{h^3}{8}N'(x)^2\left(\frac{1}{2}N(x)+A(x-b)\right)\\
    &\quad 
     + \frac{h^3}{16}N'''(x)N(x) \sigma^2 
    + \frac{3h^3}{16}N''(x)N'(x) \sigma^2 + O(h^4).
\end{align*}
Finally, we have
\begin{align*}
    &\frac{h^3}{48} \E\left[\left( N''\big(  x+\widetilde{Q}_h(x)+ \xi_h\big)N\big(x+\widetilde{Q}_h(x)+ \xi_h\big)^2+ N'\big(x+\widetilde{Q}_h(x)+ \xi_h\big)^2N\big(x+\widetilde{Q}_h(x)+ \xi_h\big)\right)\right]\\
    &= \frac{h^3}{48} \left(
    N''(x)N(x)^2 + N'(x)^2N(x) 
    \right)+ O(h^4)
\end{align*}
since all additional terms in expansions of $N, N'$ or $N''$ are at least of order $h$. Combining all yields
\begin{align}\label{eq: Strang expectation}
    \nonumber
    \E\Big[  f_{\frac{h}{2}}&(x+\widetilde{Q}_h(x)+ \xi_h) \Big]=
    x + hF(x) 
    +  \frac{h^2}{2}\left(F'(x)F(x)+\frac{1}{2}\sigma^2 F''(x)\right)\\
    \nonumber
    \quad 
    &+ h^3 \Bigg(
    \frac{3}{8}AN'(x)N(x) + \frac{1}{4}A^2N(x) + \frac{1}{6}A^3(x-b) + \frac{1}{6}N''(x)N(x)^2 +\frac{1}{6}N'(x)^2N(x)
    \\
    \nonumber
    &\quad  
      +\frac{1}{4}N'(x)A^2(x-b) + \frac{1}{4}N''(x)A^2(x-b)^2
      +\frac{3}{8}N''(x)N(x)A(x-b)
    + \frac{1}{8}N'(x)^2A(x-b)
    \\
    \nonumber
    &\quad
    +\frac{1}{4}AN''(x)\sigma^2 + \frac{3}{16}N''(x)N'(x)\sigma^2+ \frac{1}{4}N'''(x)A(x-b)\sigma^2 + \frac{3}{16}N'''(x)N(x)\sigma^2
    \\
    &\quad
    + \frac{1}{16}N''''(x)\sigma^4
    \Bigg) + O(h^4).
\end{align}
We now compare this with the conditional expectation under the true distribution. This was calculated up to order $h^2$ in \cite{parameterestimation} and is given by
\begin{align*}
    \E[X_{t_k} \ | \ X_{t_{k-1}}=x] = x + hF(x) + \frac{h^2}{2}\Big(F'(x)F(x) + \frac{1}{2}\sigma^2F''(x)\Big) + O(h^3).
\end{align*}
To expand this up to order $h^3$, we apply the infinitesimal generator $\mathcal{L}$ to the function $g(x)=F'(x)F(x)+\frac{1}{2}\sigma^2F''(x)$, as per Lemma \ref{lemma infinitesimal generator}. This yields
\begin{align*}
    (\mathcal{L}g)(x) &= F(x)g'(x) + \frac{1}{2}\sigma^2g''(x)\\
    &= F''(x)F(x)^2+F'(x)^2F(x) +  \sigma^2F'''(x)F(x)
    + \frac{3}{2}\sigma^2 F''(x)F'(x) + \frac{1}{4}\sigma^4F''''(x),
\end{align*}
so
\begin{align}\label{eq:true moment}
    \nonumber\E & \big[ X_{t_k}  \ | \ X_{t_{k-1}}=x \big]
    =
    x + hF(x) + \frac{h^2}{2}\left(F'(x)F(x)+\frac{1}{2}\sigma^2F''(x)\right)\\
    & + h^3\Bigg( \frac{1}{6}\Big(F''(x)F(x)^2  + F'(x)^2F(x) + \sigma^2F'''(x)F(x)\Big)
    + \frac{1}{4} \sigma^2F''(x)F'(x)
    +\frac{1}{24}\sigma^4F''''(x)
    \Bigg) + O(h^4).
\end{align}
Comparing \eqref{eq: Strang expectation} and \eqref{eq:true moment} and using that $F'''(x)F(x)=N'''(x)(N(x)+A(x-b))$,  $F''(x)F'(x)=N''(X)(A+N'(x))$, $F''(x)F(x)^2 = N''(x)\Big( N(x)^2+A^2(x-b)^2+2N(x)A(x-b)
    \Big)$ and $F'(x)^2F(x) = A^3(x-b) + N'(x)^2A(x-b) + 2N'(x)A^2(x-b)
    +A^2N(x) + N'(x)^2N(x) + 2AN'(x)N(x)$, we obtain 
\begin{align*}
    \E\big [ \Phi_h(X_{t_{k-1}})- &X_{t_k}\,|\, X_{t_{k-1}}=x \big ]   \\
    = 
    &\, h^3 \Bigg(
    \frac{1}{24}AN'(x)N(x) 
    + \frac{1}{12}A^2N(x)
    -\frac{1}{12}N'(x)A^2(x-b)
    +\frac{1}{12}N''(x)A^2(x-b)^2     \\
    &\qquad
    +\frac{1}{24}N''(x)N(x)A(x-b)
    -\frac{1}{24}N'(x)^2A(x-b)
    - \sigma^2\frac{1}{16}N''(x)N'(x)      \\
    &\qquad
    +\sigma^2\frac{1}{12}N'''(x)A(x-b)
    +\sigma^2\frac{1}{48}N'''(x)N(x)
    +\sigma^4\frac{1}{48}N''''(x)
    \Bigg)+O(h^4).
\end{align*}
This concludes the proof of (i).

\ny
(ii) We first derive the true variance. 
Applying Lemma \ref{lemma infinitesimal generator} with $g(x)=x^2$, we get
\begin{align*}
(\mathcal{L}g)(x)= &2xF(x) + \sigma^2, \\
(\mathcal{L}^2g)(x)= &2F(x)^2 + 2xF'(x)F(x)+ 2\sigma^2F'(x)+\sigma^2xF''(x), \\
(\mathcal{L}^3g)(x)= &6F'(x)F(x)^2
       + 2xF''(x)F(x)^2
       + 2xF'(x)^2F(x) +7\sigma^2F''(x)F(x)
       + 2\sigma^2xF'''(x)F(x)\\
       &
       + 4\sigma^2F'(x)^2  + 3\sigma^2xF''(x)F'(x)
       + 2\sigma^4F'''(x)
       + \frac{1}{2}\sigma^4F''''(x).
\end{align*}
This means that the second moment is
\begin{align*}
\E\left[X_{t_k}^2 \mid X_{t_{k-1}} = x\right]
= x^2
   + &h\bigl(2xF(x)+\sigma^2\bigr) \\
\quad + &h^2\biggl(
      F(x)^2+xF'(x)F(x)
      +\sigma^2F'(x)
      +\frac12\sigma^2xF''(x)
   \biggr) \\
\quad + &h^3\biggl(
      F'(x)F(x)^2
      +\frac13xF''(x)F(x)^2
      +\frac13xF'(x)^2F(x) \\
&\quad
      +\frac76\sigma^2F''(x)F(x)
      +\frac13\sigma^2xF'''(x)F(x)
      +\frac23\sigma^2F'(x)^2 \\
&\quad
      +\frac12\sigma^2xF''(x)F'(x)
      +\frac13\sigma^4F'''(x)
      +\frac1{12}\sigma^4xF''''(x)
   \biggr)
   + O(h^4).
\end{align*}
Squaring the true expectation found in \eqref{eq:true moment} yields 
\begin{align*}
\E[X_{t_k}\mid X_{t_{k-1}}=x]^2
= x^2
   + &h\cdot 2xF(x) \\
\quad + &h^2\biggl(
      F(x)^2
      + x F'(x)F(x)
      + \frac{\sigma^2}{2} xF''(x)
   \biggr) \\
\quad + &h^3\biggl(
      F'(x)F(x)^2
      + \frac{\sigma^2}{2}F''(x)F(x)
      + \frac{1}{3}xF''(x)F(x)^2 
      + \frac{1}{3}xF'(x)^2F(x)\\
&\quad
      + \frac{1}{3}\sigma^2xF'''(x)F(x) 
      + \frac{1}{2}\sigma^2xF''(x)F'(x)
      + \frac{1}{12}\sigma^4xF''''(x)
   \biggr)
   +O(h^4)
\end{align*}
and the true variance is then 
\begin{align*}
    \V[X_{t_k}\mid X_{t_{k-1}}=x]&=\E[X_{t_k}^2\mid X_{t_{k-1}}=x]-\E[X_{t_k}\mid X_{t_{k-1}}=x]^2\\
    &=h\sigma^2+h^2\sigma^2F'(x)+h^3\left(\frac{2}{3}\sigma^2F''(x)F(x)+\frac{2}{3}\sigma^2F'(x)^2+\frac{1}{3}\sigma^4F'''(x)\right)
    +O(h^4).
\end{align*}

\ny
We now calculate the variance of the Strang one-step prediction.
To find $\E[ (X^{[S]}_{t_k})^2 \ | \ X^{[S]}_{t_{k-1}}=x]$, we square \eqref{eq: fmuf før expectation} and find
\begin{align*}
    \left(f_{\frac{h}{2}}(x+\widetilde{Q}_h(x) + \xi_h)\right)^2
    &=
     x^2 + \widetilde{Q}_h(x)^2 + \xi_h^2
        +2x\widetilde{Q}_h(x)
        +2x\xi_h
        +2\widetilde{Q}_h(x) \xi_h
        +\frac{h^2}{4}N(x+\widetilde{Q}_h(x)+\xi_h)^2
    \\
    &\quad
        +\frac{h^3}{8}N'(x+\widetilde{Q}_h(x)+\xi_h)
            N(x+\widetilde{Q}_h(x)+\xi_h)^2
    \\&\quad
        +hxN(x+\widetilde{Q}_h(x)+\xi_h)
        +\frac{h^2}{4} xN'(x+\widetilde{Q}_h(x)+\xi_h)
            N(x+\widetilde{Q}_h(x)+\xi_h)
    \\
    &\quad
        +\frac{h^3}{24} x \Bigg( N''(x+\widetilde{Q}_h(x)+\xi_h)
            N(x+\widetilde{Q}_h(x)+\xi_h)^2
    \\
   &\quad
   +N'(x+\widetilde{Q}_h(x)+\xi_h)^2N(x+\widetilde{Q}_h(x)+\xi_h)  \Bigg)
   \\
   &\quad
    +h\widetilde{Q}_h(x) N(x+\widetilde{Q}_h(x)+\xi_h)
    +\frac{h^2}{4}\widetilde{Q}_h(x)N'(x+\widetilde{Q}_h(x)+\xi_h)
    N(x+\widetilde{Q}_h(x)+\xi_h)
    \\&\quad
    +h\xi_hN(x+\widetilde{Q}_h(x)+\xi_h)
        +\frac{h^2}{4}\xi_h 
            N'(x+\widetilde{Q}_h(x)+\xi_h)
            N(x+\widetilde{Q}_h(x)+\xi_h)
    \\
    &\quad
    +\frac{h^3}{24}\xi_h 
        \Bigg(N''(x+\widetilde{Q}_h(x)+\xi_h)
            N(x+\widetilde{Q}_h(x)+\xi_h)^2
    \\
    &\quad
        +N'(x+\widetilde{Q}_h(x)+\xi_h)^2
            N(x+\widetilde{Q}_h(x)+\xi_h)  \Bigg)
    +O(h^4).
\end{align*}
We take the expectation of the above by using the same methods as in the proof of (i), the only difference being that we now need one more term in the expansion of the second moment of $\xi_h$,
$\E[\xi_h^2] = h\sigma^2 + h^2A\sigma^2+h^3\frac{2}{3}A^2\sigma^2+O(h^4).$ This yields
\begin{align*}
    \E \left[ \left(
    f_{\frac{h}{2}}(x+\widetilde{Q}_h(x) + \xi_h)\right)^2   \right]
    &= 
    x^2+h\big( \sigma^2+2xF(x) \big)
    +h^2 \left( F(x)^2+\sigma^2F'(x)+\frac{\sigma^2}{2}F''(x) + xF'(x)F(x) \right)
    \\
    &\quad
    + h^3
    \Bigg(
    N'(x)N(x)^2
    +AF(x)N(x)
    +2 N'(x)N(x)A(x-b)
    +F(x)A^2(x-b)
        \\&\quad\quad\quad
    +\frac{2}{3} \sigma^2 A^2
    + \frac{8}{24} xN''(x)N(x)^2
    + \frac{8}{24} xN'(x)^2N(x)
    +\frac{1}{4} xAN'(x)N(x)
    \\&\quad\quad\quad
    +\frac{1}{2} xA^2N(x)
    +\frac{1}{3} xA^3(x-b)
    +\frac{1}{2} \sigma^2N'(x)^2
    +\frac{5}{4} \sigma^2 N''(x)N(x)
        \\&\quad\quad\quad
    +\frac{1}{2} x F(x)AN'(x)
    +\frac{1}{2} xN''(x)A^2(x-b)^2
    +\frac{3}{4} xN''(x)N(x)A(x-b)
        \\&\quad\quad\quad
    +\frac{1}{2} \sigma^2 x AN''(x)
    +\frac{3}{8} \sigma^2 x N'''(x)N(x)
    +\frac{1}{2} \sigma^2 x N'''(x)A(x-b)
        \\&\quad\quad\quad
    +\frac{3}{8} \sigma^2 x N''(x)N'(x)
    +\frac{1}{8} \sigma^4 x N''''(x)
    +\frac{1}{4} xN'(x)^2 A(x-b)
    \\&\quad\quad\quad
    + N'(x)A^2 (x-b)^2
    +\frac{3}{2} \sigma^2 N''(x)A(x-b)
    +\sigma^2 AN'(x)
    +\frac{1}{2} \sigma^4 N'''(x)
    \Bigg)
    \\&\quad + O(h^4).
\end{align*}
In particular, we see that the terms up to order $h^2$ equal those in the true second moment, so the error of the variance is $O(h^3)$.
We now square \eqref{eq: Strang expectation} and find
\begin{align*}
    \left( \E\left[ X^{[S]}_{t_k} \ | \ X^{[S]}_{t_{k-1}}=x \right] \right)^2
    &=
    x^2 +
    2hxF(x) + h^2F(x)^2 
     + h^2x\left(F'(x)F(x)+\frac{1}{2}\sigma^2F''(x)\right)
     \\
    &\quad
    + h^3F'(x)F(x)^2
    +\frac{h^3}{2}\sigma^2F''(x)F'(x)
    \\
    &\quad
    +h^3 x
    \Bigg(
    \frac{3}{4}AN'(x)N(x) + \frac{1}{2}A^2N(x) + \frac{1}{3}A^3(x-b) + \frac{1}{3}N''(x)N(x)^2
    \\
    \nonumber
    &\quad  
    +\frac{1}{3}N'(x)^2N(x)  +\frac{1}{2}N'(x)A^2(x-b) + \frac{1}{2}N''(x)A^2(x-b)^2
    \\
    \nonumber
    &\quad
    +\frac{3}{4}N''(x)N(x)A(x-b)
    + \frac{1}{4}N'(x)^2A(x-b)
    +\frac{1}{2}AN''(x)\sigma^2 
    \\
    \nonumber
    &\quad
    + \frac{3}{8}N''(x)N'(x)\sigma^2
    + \frac{1}{2}N'''(x)A(x-b)\sigma^2 
    + \frac{3}{8}N'''(x)N(x)\sigma^2
    \\
    &\quad
    + \frac{1}{8}N''''(x)\sigma^4
    \Bigg) + O(h^4).
\end{align*}
The variance of the Strang one-step prediction is thus
\begin{align*}
    \V\Big[ X^{[S]}_{t_k} \ |  \ X^{[S]}_{t_{k-1}}=x \Big]
    =
    \E \Bigg [ \Big( 
    f_{\frac{h}{2}}(x+&\widetilde{Q}_h(x) + \xi_h)\Big)^2   \Bigg] 
    -
    \left( \E\left[ X^{[S]}_{t_k} \ | \ X^{[S]}_{t_{k-1}}=x \right] \right)^2
    \\
    =
    h\sigma^2 + h^2\sigma^2 F'(x)
    +h^3 \Bigg( 
    &\frac{2}{3}\sigma^2A^2
     +\frac{1}{2} \sigma^2N'(x)^2
    +\frac{3}{4} \sigma^2 N''(x)N(x)
      \\ 
    &+ \sigma^2 N''(x)A(x-b)
    +\sigma^2 AN'(x)
    +\frac{1}{2} \sigma^4 N'''(x)
    \Bigg) 
    +O(h^4).
\end{align*}
We conclude that the error of the variance is
\begin{align*}
    \V\Big [ \Phi_h^{[S]}&(X_{t_{k-1}}) \ | \ X^{[S]}_{t_{k-1}}=x\Big]-\V[X_{t_k} \ | \ X_{t_{k-1}}=x]
    =\\
    &h^3
    \Bigg(
    \sigma^2\Big(\frac{1}{12}N''(x)N(x) 
    + \frac{1}{3}N''(x)A(x-b)
    -\frac{1}{6}N'(x)^2
    -\frac{1}{3}AN'(x)\Big)+\frac{\sigma^4}{6}N'''(x) 
    \Bigg)
    + O(h^4).
\end{align*}
\end{proof}

\ny
\begin{proof}[Proof that the FitzHugh-Nagumo and the double-well potential model satisfy the model assumptions] 
    In both models, the drift function is a polynomial so  we only have to show \textbf{A2}. For this it suffices to show that $N$ is one-sided Lipschitz. Let $F$ be the drift of the FitzHugh-Nagumo model with splitting given by \eqref{eq: FHN splitting: linear part A(x-b)}, \eqref{eq: FHN splitting: N(x,y)} for $c=(c_1,c_2)\in\R^2$ and consider similarly the splitting of $\widetilde{F}$, the drift of the double-well potential model, with a linearization around $c_1$ such that $\widetilde{N}(x)=\frac{1}{\e}(-x^3+3c_1^2x-2c_1^3)$.

\ny
Let $x=(x_1,x_2)$ and $y=(y_1,y_2)\in\R^2$. Since $N^{(2)}$  is constantly zero,
\begin{align*}
    (x-y)^{\top} \big(N(x)-N(y)\big)
    &= (x_1-y_1) \big( N^{(1)}(x_1,x_2) - N^{(1)}(y_1,y_2) \big)\\ 
    &= \frac{1}{\e}(x_1-y_1)\big(-x_1^3+y_1^3+3c_1^2(x_1-y_1)\big)\\
    &= \frac{3c_1^2}{\e}(x_1-y_1)^2 - \frac{1}{\e}(x_1-y_1)(x_1^3-y_1^3)\\
    &= \frac{3c_1^2}{\e}(x_1-y_1)^2 - (x_1-y_1)^2(x_1+y_1)^2\\
    &\leq \frac{3c_1^2}{\e} ||x-y||^2,
\end{align*}
so $N$ is one-sided Lipschitz with $C=\frac{3c_1^2}{\e}.$ 

\ny
Since $\widetilde{N}(x)-\widetilde{N}(y)=N^{(1)}(x,x_2)-N^{(1)}(y,y_2)$ for any $x,y\in\R$, it follows immediately from the above calculations that the drift function of the double-well potential model also fulfills the assumptions.
\end{proof}

\subsection*{Funding}
This work was funded by Novo Nordisk Foundation NNF20OC0062958.

\subsection*{Code availability}
Computer code to reproduce all simulations and figures can be found in the following repository: https://github.com/qjb831/Optimal-Strang-splitting-article

\bibliographystyle{plainnat}
\bibliography{kildeliste}

@article{parameterestimation,
  author =       "Predrag Pilipovic and Adeline Samson and Susanne Ditlevsen",
  title =        "Parameter Estimation in Nonlinear Multivariate Stochastic Differential Equations Based on Splitting Schemes",
  journal =      "The Annals of Statistics",
  volume =       "52",
  number =       "2",
  pages =        "842--867",
  year =         "2024"
}

@book{thygesen,
    title={Stochastic Differential Equations for Science and Engineering},
    author={Uffe Høgsbro Thygesen},
    year={2023},
    publisher={CRC Press}
}

@book{kutoyants,
    author = "Yury A. Kutoyants",
    title = "Statistical Inference for Ergodic Diffusion Processes",
    publisher = "Springer",
    year = "2004"
}

@article{buckwar_etal,
    author = "Evelyn Buckwar and Adeline Samson and Massimiliano Tamborrino and Irene Tubikanec",
    title = " {A splitting method for {SDE}s with locally {L}ipschitz drift: 
{I}llustration on the {F}itz{H}ugh-{N}agumo model}",
    journal = "Applied Numerical Mathematics",
    volume = "179",
    pages = "191--220",
    year = "2022"
}

@book{applied,
    author = "Simo Särkkä and Arno Solin",
    title = "Applied Stochastic Differential Equations",
    publisher = "Cambridge University Press",
    year = "2019"
}

@article{FITZHUGH1961445,
title = {{Impulses and Physiological States in Theoretical Models of Nerve Membrane}},
journal = {Biophysical Journal},
volume = {1},
number = {6},
pages = {445-466},
year = {1961},
author = {Richard FitzHugh}
}

@ARTICLE{Nagumo1962,  
author={Nagumo, J. and Arimoto, S. and Yoshizawa, S.},  
journal={Proceedings of the IRE},   
title={{An Active Pulse Transmission Line Simulating Nerve Axon}},   year={1962},  
volume={50},  
number={10},  
pages={2061-2070},  
}

@article{benzi1982stochastic,
  title={Stochastic resonance in climatic change},
  author={Benzi, Roberto and Parisi, Giorgio and Sutera, Alfonso and Vulpiani, Angelo},
  journal={Tellus},
  volume={34},
  number={1},
  pages={10--16},
  year={1982}
}

@article{benzi1983theory,
  title={A theory of stochastic resonance in climatic change},
  author={Benzi, Roberto and Parisi, Giorgio and Sutera, Alfonso and Vulpiani, Angelo},
  journal={SIAM Journal on Applied Mathematics},
  volume={43},
  number={3},
  pages={565--578},
  year={1983},
 publisher = {Society for Industrial and Applied Mathematics}
}

@InBook{EstimatingFunctions,
author = {Sørensen, Michael},
editor = {Kessler, Matthieu and Lindner, Alexander and Sørensen, Michael},
year = {2012},
month = {05},
chapter = {1},
pages = {1--97},
bookTitle = {{Statistical Methods for Stochastic Differential Equations}}, 
title = {{Estimating functions for diffusion-type processes}},
isbn = {978-1-4398-4940-8},
publisher = {Chapman and Hall/CRC},
doi = {10.1201/b12126-2},
}

@article{ditlevsen:2023,
Author = {Ditlevsen, Peter and Ditlevsen, Susanne},
Title = {Warning of a forthcoming collapse of the {Atlantic meridional overturning
   circulation}},
Journal = {Nature Communications},
Year = {2023},
Volume = {14},
Number = {1},
DOI = {10.1038/s41467-023-39810-w},
Article-Number = {4254},
EISSN = {2041-1723},
ResearcherID-Numbers = {Ditlevsen, Susanne/F-6708-2012
   },
ORCID-Numbers = {Ditlevsen, Susanne/0000-0002-1998-2783
   Ditlevsen, Peter/0000-0003-2120-7732},
Unique-ID = {WOS:001037058500010},
}

@article{pilipovic:2025,
title = {Strang splitting for parametric inference in second-order stochastic differential equations},
journal = {Stochastic Processes and their Applications},
volume = {187},
pages = {104650},
year = {2025},
issn = {0304-4149},
doi = {https://doi.org/10.1016/j.spa.2025.104650},
url = {https://www.sciencedirect.com/science/article/pii/S0304414925000912},
author = {Predrag Pilipovic and Adeline Samson and Susanne Ditlevsen},
}

@techreport{ditlevsen:2026,
    author = {Ditlevsen, Susanne and Rahbek, Anders and Damsholt, Gabriel and Ditlevsen, Peter},
    title = {Estimating the critical level of {CO2} causing a collapse of the {Atlantic Meridional Overturning Circulation}},
    institution = {PREPRINT, Research Square},
    year = {2026},
    doi = {https://doi.org/10.21203/rs.3.rs-9180655/v1}
}

@article{MCMC,
    author = {Anders C. Jensen and Susanne Ditlevsen and Mathieu Kessler and Omiros Papaspiliopoulos},
    title = {{Markov chain Monte Carlo approach to parameter estimation in the FitzHugh-Nagumo model}},
    journal = "Physical Review E",
    year = "2012",
    volume = "86",
    note = "Article 041114"
}

@book{dynamics,
    author = "Steven H. Strogatz",
    title = "Nonlinear Dynamics and Chaos: With Applications to Physics, Chemistry, and Engineering",
    publisher = "CRC Press",
    year = "2018",
    edition = "2."
}

@book{berglund,
    author = "Nils Berglund and Barbara Gentz",
    title = "Noise-Induced Phenomena in Slow-Fast Dynamical Systems",
    publisher = "Springer",
    year = "2006"
}


\end{document}